\documentclass[a4paper]{article}
\title{Lemmas of Differential Privacy}
\usepackage[T1]{fontenc}
\usepackage[utf8]{inputenc}

\author{
    Yiyang Huang\\
    {University of New South Wales}\\
    \texttt{sophia.huang1@student.unsw.edu.au}
    \and
    Cl\'ement L. Canonne\\
    {University of Sydney}\\
    \texttt{clement.canonne@sydney.edu.au}
}
\usepackage{styles}
\usepackage{cochineal} %

\addbibresource{reference.bib}

\begin{document}
    \maketitle
    
    \begin{abstract}
        We aim to collect buried lemmas that are useful for proofs. In particular, we try to provide self-contained proofs for those lemmas and categorise them according to their usage.
    \end{abstract}

    \tableofcontents

    \section{Introduction}
        Differential privacy~\cite{DworkMNS06} ensures that the result of an algorithm is statistically insensitive to changes in its input. The core differential privacy framework has been expanded with additional constraints based on the different stakeholders involved, such as central-, local-, shuffle-, and many others. Nonetheless, due to the sheer amount of work in this field, some useful results are ``buried'' and remain somewhat unknown, and some are repeatedly proved. We hope this survey can serve as a helpful compendium, ideally curating some valuable facts for various differential privacy settings.

    \section{Preliminaries} \label{preliminaries}
        We first recall the notions of central, local, and shuffle privacy. In what follows, two datasets $X, X'\in\mathcal{X}^n$ consisting of $n$ entries are said to be \emph{neighboring} (denoted $X\sim X'$) if they differ in exactly one entry~--~i.e., are at Hamming distance one.
        \subsection{Central Differential Privacy}
            \begin{definition}[(Central) Differential Privacy]
                Fix $\eps>0$ and $\delta\in[0,1]$. An algorithm $M\colon \mathcal{X}^n \to \mathcal{Y}$ satisfies \emph{$(\eps, \delta)$-differential privacy} (DP) if for every pair of neighboring datasets $X, X'$, and every (measurable) subset $S \subseteq \mathcal{Y}$:
                \[
                    \Pr[M(X) \in S] \leq e^\eps \Pr[M(X') \in S] + \delta.
                \]
                We further say that $M$ satisfies \emph{pure} differential privacy ($\eps$-DP) if $\delta = 0$, otherwise it is \emph{approximate} differential privacy.
            \end{definition}

        \subsection{Local Differential Privacy}
            \begin{definition}[Local Differential Privacy]
                An algorithm (local randomizer) $R\colon \mathcal{X} \to \mathcal{Y}$ satisfies \emph{$(\eps, \delta)$-local differential privacy} (LDP) if for any two data records, $x, x'\in\mathcal{X}$, and every subset $S \subseteq \mathcal{Y}$:
                \[
                    \Pr[R(x) \in S] \leq e^\eps \Pr[R(x') \in S] + \delta.
                \]
                We define pure and approximate LDP analogously to the (Central) DP case.
                \end{definition}
            The definition of Local DP (LDP) above is sometimes referred to as \emph{Replacement Local Differential Privacy}, to contrast it with the variant below, \emph{Deletion} LDP,  introduced in \cite{erlingsson_encode_2020}. 
    
            \begin{definition}[Deletion/Removal LDP]
                An algorithm $R\colon \mathcal{X} \to \mathcal{Y}$ is a \emph{deletion} $(\eps, \delta)$-differentially private local randomizer if there exists a random variable (also known as the reference distribution) $R_0$ such that for all $S \subseteq \mathcal{Y}$ and for all $x \in \mathcal{X}$:
                \[
                    e^{-\eps} (\Pr[R_0 \in S] - \delta) \leq \Pr[R(x) \in S] \leq e^\eps \Pr[R_0 \in S] + \delta.
                \]
            \end{definition}
            Unless explicitly specified otherwise, a local randomizer is always (replacement) LDP. One advantage of the notion of deletion LDP is that it allows in some case to obtain tighter results, e.g., by constant factors.
        \subsection{Shuffle Differential Privacy}
            \begin{definition}[Shuffle Model]
                A protocol $P=(R,S,A)$ in the shuffle model consists of three building blocks, and applies to the data from $n$ users, where user $i$ has data $x_i \in \mathcal{X}$. 
                    \begin{enumerate}
                        \item Each user applies the \emph{local randomizer} $R\colon \mathcal{X} \to \mathcal{Y}^\ast$ to their data and reports messages $(y_{i, 1}, \ldots, y_{i, m}) \gets R(x_i)$. {(The number of messages $m$ can itself be randomized.)}
                        \item The tuple of all resulting messages $(y_{i, j})_{i,j}$ is sent to the \emph{shuffler} $S\colon \mathcal{Y}^* \to \mathcal{Y}^*$ that takes in these messages and outputs them in a uniformly random order.
                        \item The shuffled tuple of messages is then passed through some \emph{analyzer} $A\colon \mathcal{Y}^* \to \mathcal{Z}$ to estimate some function $f(x_1, \ldots, x_n)$.
                    \end{enumerate}
            \end{definition}
    
            \begin{definition}[Shuffle Differential Privacy]
                A protocol $P = (R, S, A)$ is $(\eps, \delta)$-shuffle differentially private if, for all $n \in \N$, the shuffled sequence of local randomizer reports, $S \circ R \eqdef S(R(x_1), \ldots, R(x_n))$, is $(\eps, \delta)$-differentially private.
            \end{definition}

            Note that the definition of Shuffle Differential Privacy does not guarantees anything in the presence of malicious users, i.e., users which may deviate from the protocol (not use the randomizer $R$, but something else) in order to jeopardize the privacy of \emph{other} (honest) users. To account for this, the variant of \emph{Robust} Shuffle Differential Privacy was introduced, which ensures privacy even when only a fraction of the users do follow the protocol.
            \begin{definition}[Robust Shuffle Differential Privacy]
                Fix $\gamma \in (0, 1]$. A protocol $P = (R, S, A)$ is \emph{$(\eps, \delta, \gamma)$-robustly shuffle differentially private} if, for all $n \in \N$ and $\gamma' \geq \gamma$, the algorithm $S \circ R$, while only taking $\gamma \cdot n$ users input, is $(\eps, \delta)$-differentially private. i.e., $P$ guarantees $(\eps, \delta)$-shuffle differential privacy when at least a $\gamma$ fraction of users follow the protocol.
            \end{definition}
            While many shuffle private protocols which can be found in the literature do satisfy this robust version ``out-of-the-box'', the two notions are not equivalent and there exist shuffle private protocols which are not robust: see, e.g.,~\cite[Appendix~C]{Cheu21}.
            
    \subsection{Private- and public-coin protocols}
        In both the local and shuffle DP settings, all users run a local randomizer on their private data. This is captured by formally letting the randomizer $R$ be a deterministic mapping $R\colon \mathcal{X}\times\{0,1\}^\ast\to\mathcal{Y}$, where the second input is the randomness, i.e., a string of uniformly random bits, assumed private (``private coins''). Then, user $i$, having data $x_i$ and ``private'' randomness $r_i$, computes $R(x_i,r_i)$, and the $(\eps, \delta)$-DP guarantee must hold as the probability is taken over the random choice of $r_i$.
        
        One can also assume that a source of \emph{public} randomness (``public coins'') is available to all parties (users, analyzer, and outside world alike), acting as a common (non-private) random seed. In this case, the users still have access to their own, private randomness (necessary to ensure differential privacy), but also to an additional input, the public random string (possibly useful to achieve better accuracy or utility). In this case, the ``public-coin'' setting, $R$ can be seen as a deterministic mapping of the form 
            \[
                R\colon \mathcal{X}\times\{0,1\}^\ast\times\{0,1\}^\ast\to\mathcal{Y}
            \]
        and user $i$, having data $x_i$, ``private'' randomness $r_i$, and common (to all users) randomness $r_\text{pub}$, computes $R(x_i, r_i, r_\text{pub})$. Then, the $(\eps, \delta)$-DP guarantee must hold when the probability is taken over the random choice of $r_i$, for every fixed setting of $r_\text{pub}$.

    \section{Central Differential Privacy} \label{dp}
        \begin{lemma}[Pure DP Implies Approximate DP. {\cite[Lemma 5]{acharya_differentially_2017}}]
            Any $(\eps + \delta, 0)$-differentially private algorithm is also $(\eps, \delta)$-differentially private.
        \end{lemma}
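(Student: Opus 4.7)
The plan is to fix neighboring datasets $X,X'$ and a measurable $S\subseteq\mathcal{Y}$, write $p \eqdef \Pr[M(X')\in S]$ and $q \eqdef \Pr[M(X)\in S]$, and show directly that $q \le e^\eps p + \delta$. By hypothesis we already have the pure guarantee $q \le e^{\eps+\delta} p$, and trivially $q \le 1$ since $q$ is a probability. The strategy is to use whichever of these two upper bounds is stronger, depending on how large $p$ is, and pick the threshold so that the two cases fit together cleanly.

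Concretely, I would split on whether $p \le e^{-(\eps+\delta)}$ or $p > e^{-(\eps+\delta)}$. In the \emph{small-$p$} case I use the pure DP bound: it reduces the goal to $e^\eps p\,(e^\delta - 1) \le \delta$, and since $e^\eps p \le e^{-\delta}$ the left-hand side is at most $1 - e^{-\delta}$, which is $\le \delta$ by a standard one-line convexity (Taylor) argument on the exponential. In the \emph{large-$p$} case I use $q\le 1$: the bound $p > e^{-(\eps+\delta)}$ gives $e^\eps p > e^{-\delta} \ge 1-\delta$, hence $e^\eps p + \delta > 1 \ge q$, so we are done.

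The only genuine piece of analysis is identifying the right threshold, which is forced by demanding that the two sub-cases together cover $[0,1]$ without a gap; the elementary inequality $1 - e^{-\delta} \le \delta$ (equivalently $e^{-\delta} \ge 1 - \delta$) is what makes both sub-cases land exactly on the target $\delta$. Everything else is bookkeeping, and I do not expect any complication from the $\delta=0$ or $\delta=1$ boundary: at $\delta=0$ the two parameter settings coincide, and at $\delta=1$ the conclusion is vacuous.
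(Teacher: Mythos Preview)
Your proposal is correct and follows essentially the same approach as the paper: both split into a ``trivial'' case where $e^\eps p + \delta > 1$ and a ``small-$p$'' case where the pure DP bound gives $q \le e^\eps p + (e^\delta - 1)e^\eps p$ and the excess term is bounded by $1 - e^{-\delta} \le \delta$. The only cosmetic difference is the exact threshold (you split at $e^\eps p \le e^{-\delta}$, the paper at $e^\eps p \le 1-\delta$), but since $1-\delta \le e^{-\delta}$ these yield the same chain of inequalities.
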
 
        \begin{proof}
            Suppose $A$ is a $(\eps + \delta, 0)$-DP algorithm, then for any neighboring dataset $X, X'$ and any $S \subseteq \text{range}(A)$, we have:
            \begin{align*}
                \Pr[A(X) \in S]
                &\leq e^{\eps + \delta} \Pr[A(X') \in S]\\
                & = e^\eps \Pr[A(X') \in S] + (e^\delta - 1)e^\eps \Pr[A(X') \in S].
            \end{align*}
            We want to show that $\Pr[A(X) \in S] \leq e^\eps \Pr[A(X') \in S] + \delta$. Since the inequality is trivially true if $e^\eps \Pr[A(X') \in S] + \delta > 1$, we can assume $e^\eps\Pr[A(X') \in S] \leq 1-\delta$. 
            The proof is complete if we show that $(e^\delta - 1)e^\eps \Pr[A(X') \in S] \leq \delta$.  In this case,
            \begin{align*}
                (e^\delta - 1)e^\eps \Pr[A(X') \in S] &\leq (e^\delta - 1)(1-\delta) \leq (e^\delta - 1)e^{-\delta}
                = 1-e^{-\delta} \leq \delta
            \end{align*}
            where the last two inequalities use the fact that $1-x \leq e^{-x}$ for all $x\in\R$.        
        \end{proof}
        The following is a ``folklore'' result, showing that one can convert approximate DP to pure DP for mechanisms with finite output space (and so, in particular, decision algorithms).
        \begin{lemma}[Approximate DP Implies Pure DP for Finite Output Spaces]
            Suppose $A\colon \mathcal{X}\to\mathcal{Y}$ is an $(\eps,\delta)$-DP mechanism, where $\abs{\mathcal{Y}}=k \in \N$. Then, for every $\eta \in (0,1]$, there exists an $\eps'$-DP mechanism $A'\colon \mathcal{X}\to\mathcal{Y}$ such that, for every $x\in\mathcal{X}$, $\totalvardist{A(x)}{A'(x)} \leq \eta$, where $\eps' \eqdef \eps + \ln(1+\frac{\delta k}{\eta}e^{-\eps})$. In particular, for $\mathcal{Y} = \{0,1\}$, $\eps' \leq \eps + \frac{2\delta}{\eta}$.
        \end{lemma}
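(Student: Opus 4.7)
The plan is to ``smooth out'' $A$'s output distribution by mixing it with the uniform distribution on $\mathcal{Y}$. Concretely, I would define $A'\colon \mathcal{X}\to\mathcal{Y}$ to return $A(x)$ with probability $1-\eta$, and a fresh uniformly random element of $\mathcal{Y}$ with probability $\eta$, using randomness independent of $A$'s. The total-variation bound $\totalvardist{A(x)}{A'(x)} \leq \eta$ is then immediate from the natural coupling: the two distributions coincide on the event (of probability $1-\eta$) that the mixture selects the $A(x)$-branch.

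To establish $\eps'$-DP for $A'$, I would fix neighboring $x, x'$ and any $y \in \mathcal{Y}$, and compute the pointwise ratio $\Pr[A'(x)=y]/\Pr[A'(x')=y]$. Both numerator and denominator decompose as $(1-\eta)\Pr[A(\cdot)=y] + \eta/k$, so applying $A$'s approximate-DP guarantee to the singleton $\{y\}$ to substitute $\Pr[A(x)=y] \leq e^\eps\Pr[A(x')=y] + \delta$ and then pulling an $e^\eps$ out of the resulting fraction, the ratio takes the form
\[
    \frac{\Pr[A'(x)=y]}{\Pr[A'(x')=y]} \;\leq\; e^\eps + \frac{(1-\eta)\delta - (e^\eps-1)\eta/k}{(1-\eta)\Pr[A(x')=y] + \eta/k}.
\]
Dropping the negative $-(e^\eps-1)\eta/k$ contribution in the numerator and using the floor $\eta/k$ on the denominator bounds the remainder by $(1-\eta)\delta k/\eta \leq \delta k/\eta$, so the ratio is at most $e^\eps + \delta k/\eta = e^{\eps'}$ (the equality is just the definition of $\eps'$).

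The specialization to $\mathcal{Y}=\{0,1\}$ (so $k=2$) then follows from $\ln(1+u)\leq u$ applied to $u = 2\delta e^{-\eps}/\eta$. The main conceptual step — and the main thing to identify — is the mixing recipe itself, together with the observation that the $\eta/k$ floor on the denominator is precisely what absorbs the additive $\delta$-slack into a multiplicative $e^{\eps'}$ factor; without this floor (e.g., if $\mathcal{Y}$ were infinite), the approach would break down, which is exactly why the statement is restricted to finite output spaces. The remaining obstacle is purely algebraic: being careful in the rearrangement so that no sign error creeps in when the $(e^\eps-1)\eta/k$ correction is dropped.
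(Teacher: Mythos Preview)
Your proposal is correct and follows essentially the same approach as the paper: the same mixture-with-uniform construction, the same key observation that the resulting $\eta/k$ floor converts the additive $\delta$ into a multiplicative factor, and the same final bound. The only cosmetic difference is that the paper first notes $A'$ inherits $(\eps,\delta)$-DP from $A$ by postprocessing and then works with arbitrary sets $S$ (using the floor $\Pr[A'(x)\in S]\geq \eta|S|/k$), whereas you work pointwise on singletons and apply $A$'s $(\eps,\delta)$-DP guarantee directly; since pointwise ratio bounds suffice for pure DP on a finite space, the two arguments are equivalent.
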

        \begin{proof}
            Let $A$ be an $(\eps,\delta)$-DP mechanism as in the statement, and define $A'$ as the algorithm which, on input $x\in\mathcal{X}$, outputs $A(x)$ with probability $1-\eta$
            and, otherwise, outputs an element of $\mathcal{Y}$ uniformly at random. We have that, for every $x$, denoting by $\mathbf{u}_{\mathcal{Y}}$ the uniform distribution on $\mathcal{Y}$, 
            \begin{align*}
                \totalvardist{A(x)}{A'(x)} 
                &= \frac{1}{2}\sum_{y\in\mathcal{Y}} \abs{\Pr[A(x)=y] - ((1-\eta)\Pr[A(x)=y]+\frac{\eta}{k})} \\
                &= \eta\cdot  \frac{1}{2}\sum_{y\in\mathcal{Y}} \abs{\Pr[A(x)=y]-\frac{1}{k}} 
                = \eta \cdot \totalvardist{A(x)}{\mathbf{u}_{\mathcal{Y}}} \\
                &\leq \eta\,.
            \end{align*}
            Note that $A'$ is also  $(\eps,\delta)$-DP by postprocessing; and further, for every $x$, we now have $\Pr[A'(x)\in S] \geq \eta\cdot \frac{|S|}{k}$ for every $S\subseteq \mathcal{Y}$. Together, the two imply that, for every $S$ and any two neighboring $x\sim x'$,
            \begin{align*}
                \Pr[A'(x')\in S] &\leq e^\eps \Pr[A'(x)\in S] + \delta \\
                                &= e^\eps \Pr[A'(x)\in S] + \frac{k\delta}{|S|\eta} \cdot \frac{\eta|S|}{k}\\
                                &\leq e^\eps \Pr[A'(x)\in S] + \frac{k\delta}{|S|\eta} \Pr[A'(x)\in S] \\
                                &\leq \mleft(e^\eps + \frac{k\delta}{\eta} \mright) \Pr[A'(x)\in S] 
                                = e^{\eps'}\Pr[A'(x)\in S] 
            \end{align*}
            where $\eps' \eqdef \ln\mleft(e^\eps + \frac{k\delta}{\eta} \mright) = \eps + \ln\mleft(1+\frac{k\delta}{\eta}e^{-\eps}\mright)$.
        \end{proof}

        The following lemma is a useful decomposition proved by Kairouz, Oh, and Viswanath in \cite[Section 6.1]{DBLP:conf/icml/KairouzOV15}. We state the simplified version of \cite[Section 6.1]{DBLP:conf/icml/KairouzOV15} using notation similar to \cite[Lemma~3.4]{feldman_hiding_2021}.
        
        \begin{lemma}[Generalized Version of Decomposing DP Mechanism as Leaky Randomized Response {\cite[Section 6, Definition 3.1]{DBLP:conf/icml/KairouzOV15, DBLP:journals/toc/MurtaghV18}}]
            Let $R\colon \mathcal{X} \to \mathcal{Y}$ be an $(\eps, \delta)$ differentially private mechanism. Then, for all neighboring $x, x' \in \mathcal{X}$, there exists a randomized algorithm $Q\colon \{0, 1, \text{``I am $x$''}, \text{``I am $x'$''}\} \to \mathcal{Y}$ such that:
            \begin{align*}
                R(x) &= \frac{ (1 - \delta)e^{\eps}}{e^{\eps} + 1}Q(0) + \frac{(1 - \delta)}{e^{\eps} + 1}Q(1) + \delta Q(\text{``I am $x$''}), \\
                R(x') &= \frac{(1 - \delta)}{e^{\eps} + 1}Q(0) + \frac{(1 - \delta)e^{\eps}}{e^{\eps} + 1}Q(1) + \delta Q(\text{``I am $x'$''}).
            \end{align*}
        \end{lemma}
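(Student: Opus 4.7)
The plan is to reduce the lemma to two standard ingredients: (i) a ``leakage'' decomposition that, on a pair of neighbors, writes each of $R(x), R(x')$ as a $(1-\delta)$-weighted pure-$\eps$-DP part plus a $\delta$-weighted ``bad'' part, and (ii) the classical Kairouz--Oh--Viswanath representation of a pair of pointwise $e^\eps$-indistinguishable distributions as a mirrored binary mixture with weights $\frac{e^\eps}{e^\eps+1}$ and $\frac{1}{e^\eps+1}$.

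Fix neighbors $x,x'$ and write $P,P'$ for the laws of $R(x), R(x')$. The first step is to produce probability measures $P_0, P_1, D_x, D_{x'}$ on $\mathcal{Y}$ such that
\[
    P = (1-\delta) P_0 + \delta D_x, \qquad P' = (1-\delta) P_1 + \delta D_{x'},
\]
and the \emph{pointwise} bound $e^{-\eps} P_1(y) \leq P_0(y) \leq e^\eps P_1(y)$ holds for every $y \in \mathcal{Y}$. The natural construction takes $\delta D_x$ to absorb the pointwise excess $(P - e^\eps P')_+$ (padded to total mass exactly $\delta$, if needed, using an auxiliary probability measure), and similarly for $\delta D_{x'}$; the $(\eps,\delta)$-DP hypothesis ensures these excesses integrate to at most $\delta$, so the residuals $P - \delta D_x$ and $P' - \delta D_{x'}$ can be renormalized to probability measures $P_0, P_1$ satisfying the pointwise ratio bound.

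For the second step, given pointwise $\eps$-indistinguishable $P_0, P_1$, I would define
\[
    q_0 \eqdef \frac{P_0+P_1}{2} + \frac{e^\eps+1}{2(e^\eps-1)}(P_0-P_1), \qquad q_1 \eqdef \frac{P_0+P_1}{2} - \frac{e^\eps+1}{2(e^\eps-1)}(P_0-P_1),
\]
and verify three things: non-negativity of $q_0, q_1$ (which unwinds to exactly the pointwise $\eps$-DP condition), that each integrates to $1$, and the identities $P_0 = \frac{e^\eps}{e^\eps+1}q_0 + \frac{1}{e^\eps+1}q_1$, $P_1 = \frac{1}{e^\eps+1}q_0 + \frac{e^\eps}{e^\eps+1}q_1$. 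Defining $Q$ by $Q(0)\sim q_0$, $Q(1)\sim q_1$, $Q(\text{``I am }x\text{''})\sim D_x$, $Q(\text{``I am }x'\text{''})\sim D_{x'}$ and substituting the second step into the first then yields the two claimed identities; the coefficients $\frac{(1-\delta)e^\eps}{e^\eps+1}$, $\frac{1-\delta}{e^\eps+1}$, $\delta$ sum to $1$ as required.

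The main obstacle is the first step: one must engineer the padding of $D_x, D_{x'}$ so that the residual sub-probability measures are \emph{pointwise} (not merely setwise) $\eps$-DP after renormalization by $1/(1-\delta)$. This is delicate when the total excess mass is strictly less than $\delta$, since naive padding can break either non-negativity of the residuals or the pointwise ratio bound, and the choice has to respect both constraints on both sides simultaneously. Once this leakage decomposition is in hand, the second step is a short linear-algebraic check (with the degenerate case $\eps = 0$ handled trivially by taking $q_0 = q_1 = P_0 = P_1$), and the final combination is a direct substitution.
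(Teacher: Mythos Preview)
The paper does not prove this lemma: it explicitly declines, saying the proof ``is rather involved'' and directing the reader to \cite{DBLP:conf/icml/KairouzOV15} (via the hypothesis-testing/privacy-region formulation) and to \cite[Lemma~3.2]{DBLP:journals/toc/MurtaghV18}. So there is no in-paper argument to compare against beyond those pointers.

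Your two-step plan is a sound outline and amounts to a third route, different from both cited ones. Step~(ii) is completely correct: the explicit $q_0,q_1$ you write down are exactly the pure-case mixture, and your non-negativity check unwinds precisely to the pointwise $\eps$-bound. Step~(i) is also a true statement, and you are right that the padding is the crux. After stripping the excesses, the residuals $\min(P,e^\eps P')$ and $\min(P',e^\eps P)$ do satisfy the pointwise ratio bound, but their masses $1-m_x$ and $1-m_{x'}$ are in general unequal (both $\geq 1-\delta$), and bringing each down to $1-\delta$ without destroying the ratio bound is exactly the delicate part you flag but do not resolve; naive choices (e.g., independent rescaling, or padding one side with an arbitrary measure) can fail. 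This is the one genuine hole in the proposal. The cited proofs sidestep this two-stage construction altogether: \cite{DBLP:conf/icml/KairouzOV15} argues that $(\eps,\delta)$-indistinguishability is equivalent to the tradeoff curve of $(P,P')$ dominating that of the four-outcome leaky randomized response, and then obtains the Markov kernel $Q$ in one shot via a Blackwell-type sufficiency argument; \cite{DBLP:journals/toc/MurtaghV18} gives a direct construction of $Q$. If you want to stay with your decomposition, one clean fix for step~(i) is to subtract from \emph{both} residuals a common sub-measure supported where the ratio is already saturated, chosen so the two remaining masses are equalized at $1-\delta$; but at that point the bookkeeping is comparable to the direct constructions in the references.
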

        Note that $Q$ above is allowed to depend on $R$ as well as $x,x'$. 
         The proof of this lemma is rather involved, hence we direct the readers to \cite{DBLP:conf/icml/KairouzOV15} for a proof under their hypothesis testing formulation, or else see \cite[Lemma~3.2]{DBLP:journals/toc/MurtaghV18} for an argument of a different flavor.

        It is easy to see that Lemma 3.4 stated in \cite{feldman_hiding_2021} follows as a corollary where $R$ is a $(\eps, 0)$-LDP randomizer with $x, x'$ of size $1$.
        \begin{corollary}[LDP Randomizer as Postprocessing of Binary Randomized Response {\cite[Lemma 3.4]{feldman_hiding_2021}}]
            Let $R\colon \mathcal{X} \to \mathcal{Y}$ be an $\eps$-LDP local randomizer and for all $x, x' \in \mathcal{X}$. Then there exists a randomized algorithm $Q: \{0, 1\} \to \mathcal{Y}$ such that $R(x) = \frac{e^{\eps}}{e^{\eps} + 1}Q(0) + \frac{1}{e^{\eps} + 1}Q(1)$ and $R(x') = \frac{1}{e^{\eps} + 1}Q(0) + \frac{e^{\eps}}{e^{\eps} + 1}Q(1)$.
        \end{corollary}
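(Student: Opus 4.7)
The plan is to invoke the preceding lemma with $\delta = 0$. Since $R$ is $\eps$-LDP, its indistinguishability guarantee holds for every pair $x, x' \in \mathcal{X}$, so any such pair plays the role of the ``neighboring inputs'' hypothesized in the previous lemma (here the ``dataset'' is a single entry, so Hamming-neighboring simply means arbitrary pairs in $\mathcal{X}$). In particular, $R$ viewed on the pair $\{x, x'\}$ is $(\eps, 0)$-differentially private.

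Applying the preceding lemma to this pair with $\delta = 0$ produces a randomized algorithm $Q\colon \{0, 1, \text{``I am $x$''}, \text{``I am $x'$''}\} \to \mathcal{Y}$ satisfying
\begin{align*}
    R(x)  &= \tfrac{e^{\eps}}{e^{\eps} + 1} Q(0) + \tfrac{1}{e^{\eps} + 1} Q(1) + 0 \cdot Q(\text{``I am $x$''}), \\
    R(x') &= \tfrac{1}{e^{\eps} + 1} Q(0) + \tfrac{e^{\eps}}{e^{\eps} + 1} Q(1) + 0 \cdot Q(\text{``I am $x'$''}).
\end{align*}
The ``I am'' terms have coefficient $0$ and therefore contribute nothing; restricting $Q$ to the domain $\{0, 1\}$ yields precisely the two-term decomposition claimed by the corollary.

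There is essentially no obstacle: the corollary is a direct specialization of the general lemma. The only point worth articulating in a careful write-up is the semantic check that the LDP definition supplies the hypothesis of the general lemma for any pair $x, x' \in \mathcal{X}$ (not only for length-$n$ datasets at Hamming distance one), which is immediate because the local randomizer's input domain is $\mathcal{X}$ itself.
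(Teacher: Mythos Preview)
Your proposal is correct and matches the paper's own justification exactly: the paper simply remarks that the corollary follows from the preceding lemma by taking $R$ to be an $(\eps,0)$-LDP randomizer on inputs of size~$1$, which is precisely the specialization you spell out (set $\delta=0$ so the ``I am'' terms vanish, then restrict $Q$ to $\{0,1\}$).
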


    \section{Local Differential Privacy} \label{ldp}

        Readers should note that the LDP results we stated in this section are all non-interactive. The interactive case would be more technical, we did not include the most general case here since it's not the aim of this survey.

        \subsection{Replacement and Removal LDP} \label{replace-removal-ldp}

            Here, we present some results involving replacement-deletion notion of local differential privacy.

            \begin{lemma}[Replacement implies Deletion]
                Every $(\eps,\delta)$-replacement LDP randomizer is also an $(\eps,\delta)$-deletion LDP randomizer.
            \end{lemma}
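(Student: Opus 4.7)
The plan is to exhibit an explicit reference distribution $R_0$ witnessing the deletion LDP property, built directly from $R$ itself. The natural candidate is to fix an arbitrary element $x_0\in\mathcal{X}$ and set $R_0 \eqdef R(x_0)$; this has the advantage of requiring no additional structure on $\mathcal{X}$ beyond being nonempty. (If $\mathcal{X}$ is empty, the statement is vacuous.)

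With this choice, I would then verify the two inequalities in the definition of deletion LDP separately, for an arbitrary $x\in\mathcal{X}$ and an arbitrary measurable $S\subseteq\mathcal{Y}$. For the upper bound, I would apply the replacement LDP hypothesis to the ordered pair $(x, x_0)$, which gives directly
\[
\Pr[R(x)\in S] \leq e^\eps \Pr[R(x_0)\in S] + \delta = e^\eps \Pr[R_0\in S] + \delta.
\]
For the lower bound, I would apply the replacement LDP hypothesis to the reversed pair $(x_0, x)$ to obtain $\Pr[R_0\in S] = \Pr[R(x_0)\in S] \leq e^\eps \Pr[R(x)\in S] + \delta$, and then rearrange to
\[
e^{-\eps}\bigl(\Pr[R_0\in S] - \delta\bigr) \leq \Pr[R(x)\in S].
\]
Combining the two yields exactly the deletion LDP guarantee with reference distribution $R_0$.

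There is no substantive obstacle here; the only subtlety is noticing that replacement LDP is symmetric in its two arguments (as a quantified statement over all pairs $x,x'$), so it can be invoked in either direction to produce the upper and lower bounds. The argument uses no properties of the output space $\mathcal{Y}$ or the mechanism $R$ beyond the definition itself, and in particular does not require $\delta=0$.
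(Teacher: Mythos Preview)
Your proof is correct and follows exactly the approach sketched in the paper: fix an arbitrary $x_0\in\mathcal{X}$ and take $R_0 \eqdef R(x_0)$ as the reference distribution. The paper states this in a single sentence without spelling out the two inequalities; your version simply makes explicit that replacement LDP applied to the pair $(x,x_0)$ gives the upper bound and applied to $(x_0,x)$ gives the lower bound after rearranging.
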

            It is easy to see that a replacement LDP randomizer $R$ is also a deletion $\eps$-differentially private local randomizer by fixing a $x_0$ such that $R_0 = R(x_0)$.
    
            \begin{lemma}[Deletion implies Replacement]
                \label{lemma:ldp:deletion:replacement}
                Every $(\eps,\delta)$-deletion LDP randomizer is also an $(2\eps,(e^\eps+1)\delta)$-replacement LDP randomizer. In particular, every $\eps$-deletion LDP randomizer is also an $2\eps$-replacement LDP randomizer.
            \end{lemma}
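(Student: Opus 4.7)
The plan is a direct two-step chaining through the reference distribution $R_0$. Let $R$ be a deletion $(\eps,\delta)$-LDP randomizer with reference distribution $R_0$, so that for every $x \in \mathcal{X}$ and every measurable $S \subseteq \mathcal{Y}$ we have both
\[
    \Pr[R(x)\in S] \leq e^\eps \Pr[R_0 \in S] + \delta
    \qquad\text{and}\qquad
    e^{-\eps}(\Pr[R_0 \in S]-\delta) \leq \Pr[R(x)\in S].
\]
Fix arbitrary $x,x' \in \mathcal{X}$ and $S \subseteq \mathcal{Y}$; the goal is to bound $\Pr[R(x)\in S]$ by $e^{2\eps}\Pr[R(x')\in S] + (e^\eps+1)\delta$.

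First, I would apply the upper bound at $x$ to express $\Pr[R(x)\in S]$ in terms of $\Pr[R_0\in S]$. Then I would rearrange the lower bound applied at $x'$ into the equivalent form $\Pr[R_0\in S] \leq e^\eps \Pr[R(x')\in S] + \delta$, eliminating $R_0$ from the picture. Substituting the second into the first gives
\[
    \Pr[R(x)\in S] \leq e^\eps\bigl(e^\eps \Pr[R(x')\in S] + \delta\bigr) + \delta = e^{2\eps}\Pr[R(x')\in S] + (e^\eps+1)\delta,
\]
which is exactly the $(2\eps,(e^\eps+1)\delta)$-replacement LDP guarantee. Since $x,x',S$ were arbitrary, the claim follows; the pure case $\delta=0$ is immediate from the same computation.

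There is no real obstacle here: the only thing worth being slightly careful about is the algebraic rearrangement of the lower bound in the deletion definition into an upper bound on $\Pr[R_0\in S]$, and making sure the resulting $\delta$-terms combine as $e^\eps\delta + \delta = (e^\eps+1)\delta$ rather than $2\delta$ or $2e^\eps\delta$. Once both inequalities are oriented as upper bounds on their respective left-hand sides, the chaining is a one-line calculation.
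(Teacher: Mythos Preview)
Your proposal is correct and follows essentially the same two-step chaining through $R_0$ as the paper's proof: upper-bound $\Pr[R(x)\in S]$ via $R_0$, then upper-bound $\Pr[R_0\in S]$ via $R(x')$ using the rearranged lower bound, and combine. The paper's write-up is terser but the argument is identical.
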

            \begin{proof}
                For a reference distribution $R_0$ and $\forall x \in \mathcal{X}, S \subseteq \mathcal{Y}$ since $R\colon \mathcal{X} \to \mathcal{Y}$ is $(\eps, \delta)$-deletion LDP, we have:
                \begin{align*}
                    \Pr[R(x) \in S]
                    &\leq e^{\eps}\Pr[R_0 \in S] + \delta\\
                    &\leq e^{\eps} \big( e^\eps \Pr[R(x') \in S] + \delta \big) + \delta\\
                    &= e^{2\eps}\Pr[R(x') \in S] + \delta(1 + e^\eps).
                \end{align*}
                The second inequality follows from the fact that $R_0$ is the reference distribution for all $x \in \mathcal{X}$, thus it is also true for $x' \in \mathcal{X}$.
            \end{proof}

            \cite{erlingsson_encode_2020} stated, in passing, that ``every $(\eps, \delta)$-deletion LDP randomizer implies $(2\eps, 2\delta)$-replacement LDP''; however, this is not true in general. We provide an counterexample of this statement in the following:

            \emph{Counterexample}: Consider, for $\eps \in [0,1/2]$ and $\delta\in[0,1/5]$ (so that all probabilities below are indeed probabilities), the randomizer $R\colon\{0,1\}\to \{1,2,3\}$ such that
            \[
                R(0) = 
                \begin{cases}
                1 &\text{ w.p. } \frac{e^{\eps}}{3}  \\
                2 &\text{ w.p. } e^{-\eps}\left(\frac{1}{3} -\delta\right) \\
                3 &\text{ w.p. } 1- \left(\frac{e^{\eps}}{3}+ e^{-\eps}\left(\frac{1}{3} -\delta\right)\right) \\
                \end{cases}, \qquad 
                R(1) = 
                \begin{cases}
                1 &\text{ w.p. } \frac{e^{-\eps}}{3}  \\
                2 &\text{ w.p. } \frac{e^{\eps}}{3}+\delta \\
                3 &\text{ w.p. } 1- \left(\frac{e^{-\eps}}{3}+ \frac{e^{\eps}}{3}+\delta\right) \\
                \end{cases}
            \]
            with reference distribution $R_0$ uniform on $\{1,2,3\}$. For instance, for $\eps=1/4$ and $\delta=1/6$, one can check that $R$ as above is indeed an $(\eps, \delta)$-deletion LDP randomizer with reference distribution $R_0$; however, we have
            \[
            \Pr[R(1) = 2] = e^{2\eps} \Pr[R(0) = 2] + \delta(1 + e^\eps) > e^{2\eps} \Pr[R(0) = 2] + 2\delta
            \]
            so $R$ is not $(2\eps,2\delta)$-replacement LDP. \qed
            
            Finally, we conclude with the following lemma, which states that every $(\eps,\delta)$-deletion LDP randomizer is $\delta$-close to an $\eps$-deletion LDP randomizer.
            \begin{lemma}[{\cite[Lemma 3.7]{feldman_hiding_2021}}]
                Fix any $\eps>0,\delta\in[0,1]$. If $R$ is an $(\eps,\delta)$-deletion LDP randomizer with reference distribution $R_0$, then there exists $R'$ such that (i)~$\totalvardist{R(x)}{R'(x)}\leq \delta$ for every $x,x'$, and (ii)~$R'$ is an $\eps$-deletion LDP randomizer with reference distribution $R_0$. In particular, $R'$ is a $2\eps$-replacement LDP randomizer by~\cref{lemma:ldp:deletion:replacement}.
            \end{lemma}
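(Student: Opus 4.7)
The approach is to explicitly construct $R'(x)$ by clamping the density of $R(x)$ against that of $R_0$. Fix $x \in \mathcal{X}$ and let $p_x, p_0$ denote the densities of $R(x)$ and $R_0$ with respect to a common dominating measure. The $(\eps,\delta)$-deletion LDP hypothesis is equivalent to $p_x(S) \leq e^\eps p_0(S) + \delta$ and $p_0(S) \leq e^\eps p_x(S) + \delta$ for every measurable $S$. To obtain an $\eps$-deletion LDP randomizer $R'$ with the same reference $R_0$, it suffices to produce, for each $x$, a probability density $p'_x$ with $e^{-\eps} p_0 \leq p'_x \leq e^\eps p_0$ pointwise, and then control the total variation distance to $p_x$.

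I would begin with the pointwise clamp
\[
    \tilde p_x(y) := \min\!\bigl\{e^\eps p_0(y),\ \max\{p_x(y),\, e^{-\eps} p_0(y)\}\bigr\},
\]
which automatically lies in $[e^{-\eps}p_0(y),\, e^\eps p_0(y)]$. Writing $A^+ := \{p_x > e^\eps p_0\}$, $A^- := \{p_x < e^{-\eps} p_0\}$, $C := \int (p_x - e^\eps p_0)^+\,d\mu$, and $D := \int (e^{-\eps} p_0 - p_x)^+\,d\mu$, the two deletion LDP inequalities evaluated at $S = A^+$ and $S = A^-$ respectively give $C \leq \delta$ and $D \leq e^{-\eps}\delta$. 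However, $\tilde p_x$ need not be normalized: one computes $\int \tilde p_x\,d\mu = 1 + D - C$.

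To restore normalization, the plan is to remove $D - C$ mass from $\tilde p_x$ (if $D > C$) distributed anywhere outside $A^-$, or add $C - D$ mass (if $C > D$) distributed anywhere outside $A^+$, always subject to the pointwise bounds $[e^{-\eps}p_0,\,e^\eps p_0]$. Sufficient capacity is always available because $\int e^{-\eps} p_0\,d\mu \leq 1 \leq \int e^\eps p_0\,d\mu$, leaving slack of at least $|C - D|$ in the needed direction. Call the resulting normalized density $p'_x$. Decomposing $\int|p'_x - p_x|\,d\mu$ over $A^+$, $A^-$, and the correction region yields contributions summing to $C + D + |C - D| = 2\max(C, D)$, \emph{independently} of how the correction is distributed within the allowed region; this gives $\totalvardist{R(x)}{R'(x)} \leq \max(C, D) \leq \delta$. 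The second conclusion then follows immediately from~\cref{lemma:ldp:deletion:replacement}.

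The main obstacle is the precise bookkeeping in the last step: checking that the allowed correction region has enough capacity under the pointwise bounds, and verifying that the TV accounting really equals $2\max(C,D)$ regardless of where the correction is placed. The measure-theoretic issues (existence of a common dominating measure, and behavior at points where $p_0 = 0$, where the deletion LDP bound forces $p_x$ to have mass at most $\delta$) are minor but warrant brief acknowledgement.
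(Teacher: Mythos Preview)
The paper does not give its own proof of this lemma; it only states the result and cites \cite{feldman_hiding_2021}. Your clamp-and-renormalize construction is correct and is essentially the argument used in that reference: the bookkeeping you flag as the main obstacle goes through exactly as you outline (the correction region is forced to avoid $A^-$ when removing mass and $A^+$ when adding mass, because the pointwise bound is already saturated there, so the three contributions to $\int|p'_x - p_x|$ combine additively to $C + D + |C-D| = 2\max(C,D)$), and the capacity check is immediate from $\int(\tilde p_x - e^{-\eps}p_0) = 1 + D - C - e^{-\eps} \geq D - C$ and its symmetric counterpart. The measure-theoretic concerns are indeed minor; in particular $\{p_0 = 0,\, p_x > 0\} \subseteq A^+$, so that mass is already accounted for in $C \leq \delta$.
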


        \subsection{Replacement LDPs} \label{replace-ldps} 

            Recall that symmetric protocol means all LDP randomizers are the same (not their randomness). The following lemma shows how to convert an asymmetric protocol (where each user might use a different local randomizer) into a symmetric one (where all users use the same local randomizer).
            
            \begin{lemma}[Asymmetric to Symmetric LDP. {\cite[Lemma 4]{acharya_test_2018}}]
                Suppose there exists a private-coin (resp., public-coin) mechanism composed of LDP randomizers for some task $\mathcal{T}$ with $n$ users and probability of success $5/6$. Then, there exists a private-coin (resp., public-coin) mechanism with symmetric LDP randomizers for $\mathcal{T}$ with $n'=\bigO{n\log{n}}$ users and probability of success $2/3$. 
            \end{lemma}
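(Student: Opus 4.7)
The plan is to design a symmetric randomizer $R^\star$ that, on input $x$, first samples an index $J$ uniformly from $[n]$ (using the user's \emph{private} randomness) and then outputs the pair $(J, R_J(x))$, where $R_1,\ldots,R_n$ are the original asymmetric local randomizers. Since $R^\star$ is a convex combination of the $R_j$'s (with labels attached), and each $R_j$ is LDP with some parameters $(\eps_j,\delta_j)$, the symmetric randomizer is LDP with parameters $(\max_j \eps_j,\max_j \delta_j)$; crucially, the index $J$ is sampled independently of $x$, so revealing it is harmless for privacy. In the public-coin case the same construction works verbatim, using $r_\mathrm{pub}$ inside $R_J(\cdot)$ wherever $R_J$ originally consulted it.

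Next I would define the new analyzer $A^\star$: upon receiving the $n'$ reports $(J_i, Y_i)_{i=1}^{n'}$, it picks, for each $j\in[n]$, one arbitrary report with label $j$ (say, from the user with smallest index $i$ such that $J_i=j$); if no such user exists for some $j$, it aborts (equivalently, outputs an arbitrary answer). Whenever every role $j\in[n]$ is covered by at least one user, the selected $n$ reports are distributed exactly as the transcript of the original asymmetric protocol on the inputs of those $n$ selected users, so feeding them to the original analyzer $A$ yields success probability at least $5/6$.

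The remaining task is to choose $n'$ so that, with high probability, all $n$ roles are covered. This is a coupon-collector-style bound: letting $E$ be the event that some role is missed, the union bound gives $\Pr[E]\leq n(1-1/n)^{n'}\leq n\,e^{-n'/n}$. Setting $n' = c\,n\ln n$ for a sufficiently large constant $c$ makes $\Pr[E]\leq 1/6$, hence the overall success probability is at least $5/6 - 1/6 = 2/3$ by a final union bound. Thus $n'=\bigO{n\log n}$ suffices.

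The main obstacle is conceptual rather than computational: one must check that attaching the label $J$ to the output does not leak information about $x$ (it does not, since $J\perp x$) and that the construction works identically in the public-coin setting (it does, because $J$ is sampled from private randomness while $r_\mathrm{pub}$ is simply threaded into the simulated $R_J$). The only quantitative step is the coupon-collector bound, which dictates the $\log n$ overhead and is tight up to constants.
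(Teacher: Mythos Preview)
Your proof is correct and follows essentially the same approach as the paper: define the symmetric randomizer by sampling a role $J$ uniformly and outputting $(J, R_J(x))$, then use a coupon-collector bound to guarantee all $n$ roles are covered with probability at least $5/6$ when $n'=\bigO{n\log n}$, concluding by a union bound. The only minor difference is that in the public-coin case the paper samples $J$ from the \emph{public} randomness (so $J$ need not be transmitted), whereas you sample $J$ privately in both settings; either variant works for the stated lemma.
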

            \begin{proof}
                We follow the proof from \cite{acharya_test_2018}. Let $(R_i)_{i \in [n]}$ be the mechanism with $R_i\colon \mathcal{X} \to \mathcal{Y}$ being the local randomizer of the $i$-th user. We create a \emph{symmetric} (randomized) mechanism $R\colon \mathcal{X} \to [n] \times \mathcal{Y}$ defined as follows: On input $x \in \mathcal{X}$: (i) use private randomness to generate $j \in [n]$ uniformly at random, then (ii) output $(j, R_j (x))$.
                
                (If public randomness is available, then $R$ is defined as $R\colon \mathcal{X} \to \mathcal{Y}$ , where $j$ is chosen uniformly at random using the public randomness, and the output is simply $R_j(x)$~--~since all parties have access to the public randomness, there is no need to send $j$ as well.)

                We simulates $W$ if we have each $j = 1..n$ once and report them. Further, by a standard \emph{coupon-collector argument}, requiring $n' = \bigO{n\log{n}}$ we have that with probability $5/6$, each $j \in [n]$ will be drawn at least once. 

                Overall, the probability of failure is at most $1/6 + 1/6 = 1/3$ by union bound (where the first $1/6$ is the failure rate of original mechanism, and the other $1/6$ is the failure rate of the coupon-collector). 
            \end{proof}

            \begin{theorem}[Advanced Grouposition for Pure LDP. {\cite[Theorem 4.2]{bun_heavy_2017}}]
                Let $X \in \mathcal{X}^n, X' \in \mathcal{X}^n$ differ in at most $k$ entries for some $1 \leq k \leq n$. Let $A = (R_1, \ldots, R_n)\colon \mathcal{X}^n \to \mathcal{Y}$, where each $R_i$ is $\eps$-LDP. Then for every $\delta > 0$ and $\eps' = k\eps^2/2 + \eps\sqrt{2k\ln{1/\delta}}$, we have:
                \[
                    \Pr_{y \leftarrow A(X)}
                    \bigg[
                        \ln{\frac{\Pr[A(X) = y]}{\Pr[A(X') = y]}} > \eps' 
                    \bigg]
                    \leq \delta.
                \]
                In particular, for every $\delta > 0$ and every set $T \subseteq \mathcal{Y}$, we have $\Pr[A(X) \in T] \leq e^{\eps'}\Pr[A(X') \in T] + \delta.$
            \end{theorem}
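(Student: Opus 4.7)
The plan is to run the classical Advanced Composition argument of Dwork--Rothblum--Vadhan, but restricted to the at most $k$ coordinates on which $X$ and $X'$ actually disagree. Let $I=\{i\in[n]:X_i\neq X_i'\}$, so $|I|\leq k$; since the local randomizers are non-interactive and use independent private coins, $A(X)=(R_1(X_1),\dots,R_n(X_n))$ is a product distribution and $R_i(X_i)$, $R_i(X_i')$ have identical distributions for every $i\notin I$. Consequently the log-likelihood ratio factors as
\[
\ln\frac{\Pr[A(X)=y]}{\Pr[A(X')=y]} \;=\; \sum_{i\in I} Z_i,\qquad Z_i \eqdef \ln\frac{\Pr[R_i(X_i)=y_i]}{\Pr[R_i(X_i')=y_i]},
\]
and when $y\sim A(X)$ the $Z_i$'s are independent, with the coordinates outside $I$ contributing zero deterministically.

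Next I would bound each $Z_i$'s range and mean. Pure $\eps$-LDP of $R_i$ forces $Z_i\in[-\eps,\eps]$ pointwise. For the mean, $\mathbb{E}[Z_i]$ is precisely the KL divergence $\mathrm{KL}(R_i(X_i)\,\|\,R_i(X_i'))$; the standard bound for $\eps$-indistinguishable distributions (extremal case: two-point randomized response) gives $\mathrm{KL}(P\|Q)\leq \eps\tanh(\eps/2)\leq \eps^2/2$, hence $\sum_{i\in I}\mathbb{E}[Z_i]\leq k\eps^2/2$. Since the $Z_i$'s are independent and each lies in an interval of length $2\eps$, Hoeffding's inequality applied to the at most $k$ nonzero summands yields
\[
\Pr\!\left[\sum_{i\in I}Z_i - \sum_{i\in I}\mathbb{E}[Z_i] > t\right]\leq \exp\!\left(-\frac{t^2}{2k\eps^2}\right).
\]
Taking $t=\eps\sqrt{2k\ln(1/\delta)}$ makes the right-hand side equal $\delta$, and combining with the mean bound gives $\sum_{i\in I} Z_i\leq k\eps^2/2+\eps\sqrt{2k\ln(1/\delta)}=\eps'$ with probability at least $1-\delta$, which is exactly the first displayed inequality.

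For the ``in particular'' clause I would use the standard translation from a high-probability bound on the log-ratio to an approximate DP statement: setting $B=\{y:\ln(\Pr[A(X)=y]/\Pr[A(X')=y])>\eps'\}$, what we just proved gives $\Pr[A(X)\in B]\leq\delta$, and then for any $T\subseteq\mathcal{Y}$,
\[
\Pr[A(X)\in T]\leq\Pr[A(X)\in T\setminus B]+\delta\leq e^{\eps'}\Pr[A(X')\in T\setminus B]+\delta\leq e^{\eps'}\Pr[A(X')\in T]+\delta.
\]
The main obstacle is really the KL bound $\mathrm{KL}(P\|Q)\leq\eps^2/2$: naively integrating the pointwise bound $|\ln(P/Q)|\leq\eps$ only yields $\eps(e^\eps-1)/2$, which is too weak to recover the clean $\eps^2/2$ in the statement; the sharper $\tanh$-form is obtained by a short convexity argument identifying randomized response as the worst case. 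Everything else is bookkeeping once the reduction to $|I|\leq k$ independent coordinates and the choice of Hoeffding's inequality are in place.
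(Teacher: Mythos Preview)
Your proposal is correct and follows essentially the same argument as the paper: decompose the privacy loss into a sum over the at most $k$ differing coordinates, bound each term in $[-\eps,\eps]$ and its mean by $\eps^2/2$, then apply Hoeffding with $t=\eps\sqrt{2k\ln(1/\delta)}$. The only cosmetic differences are that the paper obtains the KL bound by citing the zCDP result of Bun--Steinke (Proposition~3.3 with $\alpha=1$) rather than the $\eps\tanh(\eps/2)$ randomized-response computation you sketch, and that you additionally spell out the standard ``bad set $B$'' argument for the final $(\eps',\delta)$-DP conclusion, which the paper leaves implicit.
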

            \begin{proof}
                Without loss of generality, we assume that $X, X'$ differ in the first $k$ entries. We begin with the privacy loss random variable of $A(X), A(X')$:
                \[
                    L_{A(X), A(X')}
                    = \ln{\frac{\Pr[A(X) = y]}{\Pr[A(X') = y]}}
                    = \sum^{k}_{i=1} \ln{ \frac{\Pr[R_i(x_i) = y_i]}{\Pr[R_i(x'_i) = y_i]}}
                    = \sum^k_{i=1} L_{R_i(x_i), R_i(x'_i)}
                \]
                 By taking expectation of $L_{R_i(x_i), R_i(x'_i)}$, it becomes KL-divergence. Which now we can apply the Proposition 3.3 of \cite{buns_concentrated_16} (taking $\alpha = 1$) and acquire:
                \[
                    \expect{L_{A(X), A(X')}}
                    = \sum^k_{i=1} \expect{L_{R_i(x_i), R_i(x'_i)}}
                    \leq \frac{k}{2}\eps^2.
                \]
                Since $R_i$ are $\eps$-LDP randomizers, we have $L_{R_i(x_i), R_i(x'_i)} \in [-\eps, \eps]$. Then by Hoeffding's inequality, for every $t > 0$:
                \begin{align*}
                    \exp{\big(\frac{-t^2}{2k\eps^2}\big)}
                    &\geq \Pr[\sum^k_{i=1} L_{R_i(x_i), R_i(x'_i)} > \sum^k_{i=1} \expect{L_{R_i(x_i), R_i(x'_i)}} + t ]\\
                    &= \Pr[ L_{A(X), A(X')} > \sum^k_{i=1} \expect{L_{R_i(x_i), R_i(x'_i)}} + t ]\\
                    &\geq \Pr[ L_{A(X), A(X')} > k\eps^2/2 + t ] \quad \text{(Since $\expect{L_{R_i(x_i), R_i(x'_i)}}
                    \leq \frac{1}{2}\eps^2$)}.
                \end{align*}
                Thus, we can choose $t=\eps\sqrt{2k\ln{(1/\delta)}}$ such that $\delta = \exp{(-t^2/2k\eps^2)}$, which completes the proof.
            \end{proof}

            Advanced grouposition is a useful in the sense that it presents the implication from LDP to DP. We can see that as an analogue of \cite[Advanced Composition Theorem]{DworkMNS06} as they share similar proof techniques. 

            \begin{theorem}[Advanced Grouposition for Approximate LDP. {\cite[Theorem 4.3]{bun_heavy_2017}}]
                Let $X \in \mathcal{X}^n, X' \in \mathcal{X}^n$ differ in at most $k$ entries for some $1 \leq k \leq n$. Let $A = (R_1, \ldots, R_n): \mathcal{X}^n \to \mathcal{Y}$ be $(\eps, \delta)$-LDP. Then for every $\delta' > 0$ and $\eps' = k\eps^2/2 + \eps\sqrt{2k\ln{1/\delta'}}$, and every set $T \subseteq \mathcal{Y}$, we have: $\Pr[A(X) \in T] \leq e^{\eps'}\Pr[A(X') \in T] + \delta + k \delta'.$
            \end{theorem}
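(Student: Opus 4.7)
The plan is to reduce the approximate LDP case to the pure LDP case already established above, by peeling off a ``bad'' event at each of the $k$ differing coordinates via the Kairouz--Oh--Viswanath decomposition (the earlier ``Decomposing DP Mechanism as Leaky Randomized Response'' lemma).

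Without loss of generality, assume $X$ and $X'$ differ only in their first $k$ coordinates. For each $i \in [k]$, apply the decomposition to $R_i$ with the neighboring pair $(x_i, x_i')$ to obtain an auxiliary randomized mapping $Q_i$ such that both $R_i(x_i)$ and $R_i(x_i')$ are written as convex combinations of $Q_i(0)$, $Q_i(1)$, and a single ``leaky'' branch (one for each input), with total weight $1-\delta$ on the pair $(Q_i(0),Q_i(1))$ and weight $\delta$ on the leaky branch. Crucially, both decompositions use the \emph{same} $Q_i$, and the pair $(Q_i(0),Q_i(1))$ appears as a pure $\eps$-biased randomized response on $\{0,1\}$ in both cases. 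We now couple the two executions: sample an independent indicator $B_i \sim \mathrm{Ber}(\delta)$ for each $i\in[k]$, using the same $B_i$ regardless of whether the $i$-th coordinate is $x_i$ or $x_i'$, and set the output of coordinate $i$ to come from the pure branch if $B_i = 0$ and from the leaky branch if $B_i = 1$.

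Let $E = \bigcap_{i=1}^{k}\{B_i = 0\}$, so $\Pr[E^c]\leq k\delta$ by a union bound. Conditioned on $E$, at each differing coordinate the output is produced by postprocessing a pure $\eps$-randomized response through $Q_i$, and at each non-differing coordinate $R_i$ is applied identically in both executions. Hence, conditioned on $E$, $A(X)$ and $A(X')$ are outputs of a pure $\eps$-LDP mechanism on data that differ in at most $k$ coordinates, so the previous theorem (Advanced Grouposition for Pure LDP) applies and gives, for $\eps' = k\eps^2/2 + \eps\sqrt{2k\ln(1/\delta')}$ and every $T\subseteq \mathcal{Y}$:
\[
\Pr[A(X) \in T \mid E] \leq e^{\eps'}\,\Pr[A(X') \in T \mid E] + \delta'.
\]
Unconditioning by the law of total probability (and bounding $\Pr[E]\Pr[\cdot\mid E]\le\Pr[\cdot]$) then yields
\[
\Pr[A(X)\in T] \;\leq\; e^{\eps'}\Pr[A(X')\in T] + \delta' + k\delta,
\]
which matches the claimed bound.

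The main subtlety will be the coupling step: one must sample the indicators $B_i$ before, and independently of, whether $x_i$ or $x_i'$ is plugged into $R_i$, so that the event $E$ is well-defined across both executions and conditioning on $E$ preserves the pure $\eps$-LDP structure \emph{simultaneously} for $A(X)$ and $A(X')$. Once this is set up, the remainder is a routine union bound paired with a direct invocation of the pure LDP advanced grouposition.
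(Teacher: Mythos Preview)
The paper does not supply its own proof of this statement; it is only stated (with a citation to \cite{bun_heavy_2017}), so there is nothing to compare against directly. Your approach is correct and is precisely the standard lift from the pure case: use the Kairouz--Oh--Viswanath decomposition at each of the $k$ differing coordinates, couple the ``leaky'' branch through a common $B_i\sim\mathrm{Ber}(\delta)$, and condition on $E=\{B_1=\cdots=B_k=0\}$. Conditioned on $E$, each differing coordinate is a postprocessing of a pure $\eps$-randomized response, hence pure $\eps$-LDP on the relevant input pair, and the proof of the pure grouposition theorem only uses the $\eps$-LDP guarantee on the specific pairs $(x_i,x_i')$, so invoking it is legitimate. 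The unconditioning step is also clean because $E$ is defined on randomness independent of the input, so it is the same event for both $A(X)$ and $A(X')$.

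One remark on the final line: your argument actually yields the additive term $k\delta+\delta'$, not the $\delta+k\delta'$ written in the theorem statement. Since $\delta$ is the per-randomizer approximate-LDP slack (incurred at each of the $k$ coordinates via the union bound on $E^c$) and $\delta'$ is the single Hoeffding tail parameter sitting inside $\eps'$, the bound $k\delta+\delta'$ is the correct one; the swap in the stated theorem appears to be a transcription typo. So your ``matches the claimed bound'' is only morally true --- what you prove is the right expression.
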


            The following is a composition of LDP randomizers.
            
            \begin{lemma}[Composition of $(\eps, \delta)$-LDP Randomizers. {\cite[Lemma A.1]{erlingsson_encode_2020}}] \label{ldp:composition}
                Assume that for every $(\eps_1, \delta_1)$-LDP randomizer $\mathcal{Q}_1: \{0, 1\} \to \{0, 1\}$ and every $(\eps_2, \delta_2)$-LDP randomizer $\mathcal{Q}_2: \{0, 1\} \to \{0, 1\}$ we have that $\mathcal{Q}_2 \circ \mathcal{Q}_1$ is a $(\eps, \delta)$-LDP randomizer. Then for every $(\eps_1, \delta_1)$-LDP randomizer $R_1: \mathcal{X} \to \mathcal{Y}$ and $(\eps_2, \delta_2)$-LDP randomizer $R_2: \mathcal{Y} \to \mathcal{Z}$, we have that $R_2 \circ R_1$ is an $(\eps, \delta)$-LDP randomizer.
            \end{lemma}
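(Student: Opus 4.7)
Fix arbitrary $x, x' \in \mathcal{X}$ and measurable $S \subseteq \mathcal{Z}$, and let $P_b \eqdef \Pr[R_2(R_1(x_b)) \in S]$ with $x_0 = x$, $x_1 = x'$. The plan is to construct a pair of binary-binary randomizers $\mathcal{Q}_1, \mathcal{Q}_2 : \{0,1\} \to \{0,1\}$, with $\mathcal{Q}_1$ being $(\eps_1, \delta_1)$-LDP and $\mathcal{Q}_2$ being $(\eps_2, \delta_2)$-LDP, such that $\Pr[(\mathcal{Q}_2 \circ \mathcal{Q}_1)(b) = 1] = P_b$ for $b \in \{0,1\}$. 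Invoking the hypothesis on this particular composition then yields $P_0 \leq e^\eps P_1 + \delta$, proving $(\eps, \delta)$-LDP for $R_2 \circ R_1$ on the pair $(x, x')$ and set $S$. As $(x, x', S)$ were arbitrary, the conclusion follows.

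For $\mathcal{Q}_2$, I would pick two distributions $D_0, D_1$ on $\mathcal{Y}$ and define $\mathcal{Q}_2(c)$ to sample $y \sim D_c$ and output $\mathbf{1}[R_2(y) \in S]$. Since the $(\eps_2, \delta_2)$-LDP inequality for $R_2$ holds pointwise for every $y, y'$, it lifts by integration to any pair of distributions on $\mathcal{Y}$: one gets $\Pr[R_2(D_0) \in S] \leq e^{\eps_2} \Pr[R_2(D_1) \in S] + \delta_2$, and symmetrically, so $\mathcal{Q}_2$ is automatically $(\eps_2, \delta_2)$-LDP regardless of how $D_0, D_1$ are chosen. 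This frees me to tune $D_0, D_1$ to match the rest of the composition.

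For $\mathcal{Q}_1$ and the choice of $D_0, D_1$, I would appeal to the Kairouz-style decomposition of $R_1$ on $(x, x')$ stated earlier in this paper, writing $R_1(x_b) = \alpha D_0 + \beta D_1 + \delta_1 E_b$ with $(\alpha, \beta)$ interchanged between $b = 0$ and $b = 1$, and $\alpha + \beta = 1 - \delta_1$. Plugging those same $D_0, D_1$ into $\mathcal{Q}_2$, the composition probability $\Pr[(\mathcal{Q}_2 \circ \mathcal{Q}_1)(b) = 1]$ becomes linear in $q_b \eqdef \Pr[\mathcal{Q}_1(b) = 1]$, and I would solve for $q_b$ by linear interpolation so that this convex combination of $\Pr[R_2(D_0) \in S]$ and $\Pr[R_2(D_1) \in S]$ matches $P_b$, with the residual ``leak'' contribution $\delta_1 \Pr[R_2(E_b) \in S]$ absorbed into the $\delta_1$ slack that a binary-binary $(\eps_1, \delta_1)$-LDP randomizer is allowed.

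\textbf{Main obstacle.} The delicate step is verifying that the $(q_0, q_1)$ obtained by this interpolation lies in $[0,1]^2$ and satisfies the full binary-binary $(\eps_1, \delta_1)$-LDP constraints; this is precisely where the $\delta_1$-bounded leak mass of the Kairouz decomposition is indispensable, since it is the only allowance one has for the $\delta_1 \Pr[R_2(E_b) \in S]$ term. The boundary case $\Pr[R_2(D_0) \in S] \approx \Pr[R_2(D_1) \in S]$, where the linear interpolation degenerates, likely needs a separate, easier argument: in that regime $P_0 \approx P_1$ and the $(\eps, \delta)$-LDP inequality holds by continuity. I expect this bookkeeping, rather than the conceptual construction itself, to form the bulk of the actual proof.
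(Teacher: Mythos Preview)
Your route is genuinely different from the paper's, and the step you flag as ``delicate bookkeeping'' is, I believe, an actual obstruction rather than routine verification.

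Writing $a=\Pr[R_2(D_0)\in S]$, $b=\Pr[R_2(D_1)\in S]$, $e_j=\Pr[R_2(E_j)\in S]$ for the Kairouz components, your exact-matching requirement forces (say for $b>a$)
\[
q_0=\beta+\delta_1\,\frac{e_0-a}{b-a},\qquad q_1=\alpha+\delta_1\,\frac{e_1-a}{b-a}.
\]
Nothing in the Kairouz decomposition ties the leak distributions $E_0,E_1$ to $D_0,D_1$, so $e_j$ can sit well outside $[a,b]$ even when $b-a$ is \emph{not} small; the correction $\delta_1(e_j-a)/(b-a)$ is then unbounded and can push $q_j$ out of $[0,1]$. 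This is not confined to the degenerate regime $a\approx b$ that your fallback handles: it occurs whenever $a,b$ both lie strictly inside the range of $y\mapsto\Pr[R_2(y)\in S]$. And ``absorbing the leak into the $\delta_1$ slack'' by simply taking $q_0=\beta$, $q_1=\alpha$ only yields $P_0\le e^\eps P_1+\delta+O(\delta_1)$, not the claimed $(\eps,\delta)$ bound. So the exact-simulation plan, as stated, does not close.

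The paper avoids all of this by neither invoking Kairouz nor attempting to reproduce $P_b$. It argues by contradiction: assuming $P_0-e^\eps P_1>\delta$, it sets $y_0,y_1\in\mathcal{Y}$ to be the argmin and argmax of $y\mapsto\Pr[R_2(y)\in S]$, defines $\mathcal{Q}_2(c)=\mathbf{1}[R_2(y_c)\in S]$, and defines $\mathcal{Q}_1(0)=\mathbf{1}[R_1(x)\in P_1]$, $\mathcal{Q}_1(1)=\mathbf{1}[R_1(x')\in P_1]$ for the set $P_1=\{y:\Pr[R_1(x)=y]>e^\eps\Pr[R_1(x')=y]\}$. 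Both $\mathcal{Q}_i$ are then $(\eps_i,\delta_i)$-LDP \emph{automatically} (they are just input restriction followed by postprocessing of $R_i$), so there is nothing to verify; a short computation using the extremality of $y_0,y_1$ shows $\Pr[\mathcal{Q}_2(\mathcal{Q}_1(0))=1]-e^\eps\Pr[\mathcal{Q}_2(\mathcal{Q}_1(1))=1]>\delta$, contradicting the hypothesis. The extremal choice of $y_0,y_1$ is precisely what lets the paper sidestep the interpolation you are struggling with.
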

            \begin{proof}
                Let $R_1: \mathcal{X} \to \mathcal{Y}$ be a $(\eps_1, \delta_1)$-LDP randomizer and $R_2: \mathcal{Y} \to \mathcal{Z}$ be a $(\eps_2, \delta_2)$-LDP randomizer. 
                
                We will prove this by contradiction. First, assume that for some $(\eps, \delta)$, there exists an event $S \subseteq Z$ such that for some $x, x'$:
                
                \begin{equation}
                    \Pr[R_2 (R_1 (x)) \in S] > e^\eps \Pr[R_2 (R_1(x')) \in S] + \delta.
                \end{equation}
    
                Then, we will show that there exists an $(\eps_1, \delta_1)$-LDP local randomizer $\mathcal{Q}_1: \{0, 1\} \to \{0, 1\}$ and $(\eps_2, \delta_2)$-LDP local randomizer $\mathcal{Q}_2: \{0, 1\} \to \{0, 1\}$ such that:
    
                \[
                    \Pr[\mathcal{Q}_2 (\mathcal{Q}_1 (x)) = 1] - e^\eps \Pr[\mathcal{Q}_2 (\mathcal{Q}_1 (x')) = 1] > \delta,
                \]
    
                which violates the assumption (that $\mathcal{Q}_2 \circ \mathcal{Q}_1$ is $(\eps, \delta)$-LDP) of the lemma.
    
                Let:
                \[
                    y_0 := \argmin_{y \in \mathcal{Y}} \Pr [R_2(y) \in S],
                \]
                \[
                    y_1 := \argmax_{y \in \mathcal{Y}} \Pr [R_2(y) \in S],
                \]
                \[
                    P_1 := \{ y \in \mathcal{Y} \mid \Pr [R_1(x) = y] - e^\eps \Pr [R_1(x') = y] > 0 \}.
                \]
    
                By using $y_0, y_1$ and our assumption, we can construct the following:
                \begin{align*}
                    & \mleft( \Pr [R_1(x) \notin P_1] - e^\eps \Pr [R_1(x') \notin P_1] \mright) \cdot \Pr[R_2(y_0) \in S]\\
                    & + \mleft( \Pr [R_1(x) \in P_1] - e^\eps \Pr [R_1(x') \in P_1] \mright) \cdot \Pr[R_2(y_1) \in S]\\
                    & \geq \sum_{y \in \mathcal{Y}} \bigg( \Pr[R_2(y) \in S] \cdot \big( \Pr[R_1(x) = y] - e^\eps \Pr[R_1(x') = y] \big) \bigg)\\
                    & = \sum_{y \in \mathcal{Y}} \Pr [R_2(y) \in S \land R_1(x) = y] - e^\eps \sum_{y \in \mathcal{Y}}\Pr[R_2(y) \in S \land R_1(x') = y]\\
                    & = \Pr[R_2 (R_1 (x)) \in S] - e^\eps \Pr[R_2 (R_1(x')) \in S]\\
                    & > \delta.
                \end{align*}
    
                Now we can define $\mathcal{Q}_1(0) := \indic{R_1(x) \in P_1}$ and $\mathcal{Q}_1(1) := \indic{R_1(x') \in P_1}$, where $\indic{\cdot}$ is the indicator function. Since we defined $\mathcal{Q}_1$ by restraining the the set of inputs and post-processing the output, thus $\mathcal{Q}_1$ is a $(\eps_1, \delta_1)$-LDP randomizer as well.
    
                Next, we can define $b \in \{0, 1\}, \mathcal{Q}_2(b) := \indic{R_2(y_b) \in S}$. It is also easy to see that $\mathcal{Q}_2$ is a $(\eps_2, \delta_2)$-LDP randomizer. By using the previous result, we have:
                \begin{align*}
                    & \Pr[\mathcal{Q}_2 (\mathcal{Q}_1 (x)) = 1] - e^\eps \Pr[\mathcal{Q}_2 (\mathcal{Q}_1 (x')) = 1]\\
                    & = \sum_{b \in \{0, 1\}} \bigg( \Pr[\mathcal{Q}_2(b) = 1] \cdot \big( \Pr[\mathcal{Q}_1(0) = b] - e^\eps \Pr[\mathcal{Q}_1(1)] = b \big) \bigg)\\
                    & = \big( \Pr[R_1(x) \notin P_1] - e^\eps \Pr[R_1(x') \notin P_1] \big)\cdot \Pr[R_2(y_0) \in S]\\
                    & + \big( \Pr[R_1(x) \in P_1] - e^\eps \Pr[R_1(x') \in P_1] \big)\cdot \Pr[R_2(y_1) \in S]\\
                    & > \delta.
                \end{align*}
                which is what we want to show for contradiction.
            \end{proof}

            Following corollary identifies the value of $\eps$ for $R_2 \circ R_1$.
    
            \begin{corollary}[Followed from \cref{ldp:composition}. {\cite[Corollary A.2]{erlingsson_encode_2020}}] \label{composition-ldp-weaker-result}
                For every $\eps_1$-LDP randomizer $R_1: \mathcal{X} \to \mathcal{Y}$ and every $\eps_2$-LDP randomizer $R_1: \mathcal{Y} \to \mathcal{Z}$, we have that $R_2 \circ R_1$ is a $\eps$-LDP randomizer, where $\eps \eqdef \ln{\big(\frac{e^{\eps_1 + \eps_2}+1}{e^{\eps_1} + e^{\eps_2}}\big)}$. In addition, if $R_1$ is removal $\eps_1$-LDP, then $R_2 \circ R_1$ is a removal $\eps$-LDP.
            \end{corollary}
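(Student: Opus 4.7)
The plan is to invoke \cref{ldp:composition}, which reduces the task to showing the claimed $\eps$ works for the composition of two arbitrary binary LDP randomizers $\mathcal{Q}_1, \mathcal{Q}_2 \colon \{0,1\} \to \{0,1\}$. Parameterize by $a := \Pr[\mathcal{Q}_1(0) = 0]$, $c := \Pr[\mathcal{Q}_1(1) = 0]$, $u := \Pr[\mathcal{Q}_2(0) = 1]$, and $v := \Pr[\mathcal{Q}_2(1) = 1]$; the $\eps_1$- and $\eps_2$-LDP constraints translate to interval constraints on $(a,c)$ and $(u,v)$. The probability that the composition outputs $1$ on input $x \in \{0,1\}$ is $p_x = a_x u + (1-a_x) v$ with $a_0 = a$, $a_1 = c$, and the task reduces to bounding $p_0/p_1$ and $(1-p_0)/(1-p_1)$ by $e^\eps = \frac{1+e^{\eps_1+\eps_2}}{e^{\eps_1}+e^{\eps_2}}$.

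Next, I would observe that $p_0/p_1$ is a linear-fractional (M\"obius) function of $(a,c)$ when $(u,v)$ is fixed and vice versa, hence monotonic in each pair on its feasible interval; consequently its supremum over the compact feasible box is attained at a vertex. A short case analysis on the signs of $a-c$ and $u-v$ shows that the extremal vertex corresponds to both $\mathcal{Q}_1$ and $\mathcal{Q}_2$ being the symmetric $\eps_i$-randomized responses, e.g.\ $a = e^{\eps_1}/(e^{\eps_1}+1)$, $c = 1/(e^{\eps_1}+1)$, $u = 1/(e^{\eps_2}+1)$, $v = e^{\eps_2}/(e^{\eps_2}+1)$. Plugging these in gives $p_0 = \frac{e^{\eps_1}+e^{\eps_2}}{(e^{\eps_1}+1)(e^{\eps_2}+1)}$ and $p_1 = \frac{1 + e^{\eps_1+\eps_2}}{(e^{\eps_1}+1)(e^{\eps_2}+1)}$, so $p_1/p_0 = e^\eps$; the analysis for $(1-p_0)/(1-p_1)$ is symmetric.

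The main obstacle is the clean handling of the vertex-optimization claim: one must verify monotonicity of the M\"obius function in each variable pair on the entire feasible box and rule out interior extrema, which requires careful bookkeeping of signs. For the removal-LDP addendum, I would apply the same binary reduction but with $R_2(R_{1,0})$ as the reference distribution for $R_2 \circ R_1$: introducing $r := \Pr[R_{1,0} = 0]$ as an extra parameter and bounding the ratio $\Pr[R_2(R_1(x)) = 1]/\Pr[R_2(R_{1,0}) = 1]$ by the same M\"obius/vertex argument yields the same privacy constant.
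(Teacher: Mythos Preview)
Your proposal is correct and follows essentially the same route as the paper: both invoke \cref{ldp:composition} to reduce to binary randomizers, then solve the resulting finite-dimensional optimization over the LDP-constrained parameters to identify the worst case as the symmetric randomized responses. The only cosmetic difference is that the paper collapses $R_2$ to the single ratio $\alpha = \Pr[R_2(0)=1]/\Pr[R_2(1)=1]$ before optimizing over $(p_0,p_1)$, whereas you keep all four parameters and argue via the vertex/M\"obius structure---be careful, though, that the feasible region for $(a,c)$ is a polytope (with vertices $(0,0)$, $(1,1)$, and the two symmetric RR points) rather than a box, so ``interval constraints'' and ``feasible box'' are slight misnomers even if your vertex conclusion is right.
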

            \begin{proof}
                We can reduce the problem to $\mathcal{X} = \mathcal{Y} = \mathcal{Z} = \{0, 1\}$ using \cref{ldp:composition}. We choose to bound the case of $R_2 \circ R_1 (b) = 1$ for $b \in \{0, 1\}$:
                \[
                    \frac{\Pr[R_2 \circ R_1 (0) = 1]}{\Pr[R_2 \circ R_1 (1) = 1]} = \frac{\Pr[R_1 (0) = 0] \cdot \Pr[R_2 (0) = 1] + \Pr[R_1 (0) = 1] \cdot \Pr[R_2 (1) = 1]}{\Pr[R_1 (1) = 0] \cdot \Pr[R_2 (0) = 1] + \Pr[R_1 (1) = 1] \cdot \Pr[R_2 (1) = 1]}.
                \]
                Defining $p_0 \eqdef \Pr[R_1(0) = 0], p_1 \eqdef \Pr[R_1(1) = 0], \alpha = \Pr[R_2(0) = 1] / \Pr[R_2(1) = 1]$, the above expression simplifies to:
                \[
                    \frac{1 + (\alpha - 1)p_0}{1 + (\alpha - 1)p_1}.
                \]
                Since we want to find $\eps$ such that:
                \[
                    \frac{\Pr[R_2 \circ R_1 (0) = 1]}{\Pr[R_2 \circ R_1 (1) = 1]} \leq e^\eps,
                \]
                we came to solve the following problem:
                \begin{align*}
                    \max_{p_0, p_1} \quad
                        & \frac{1 + (\alpha - 1)p_0}{1 + (\alpha - 1)p_1}\\
                    \textrm{s.t.} \quad
                        & \alpha = e^{\eps_2}\\
                        & p_0 > p_1\\
                        & e^{-\eps_1} \leq \frac{p_0}{p_1} \leq e^{\eps_1}\\
                        & e^{-\eps_1} \leq \frac{1-p_0}{1-p_1} \leq e^{\eps_2}.
                \end{align*}
                Note that we can assume without loss of generality that $\alpha \geq 1$ and thus the expression is maximized while $\alpha = e^{\eps_2}$ and $p_0 > p_1$. Solving this optimization problem yields that $\eps \eqdef \ln{\big(\frac{e^{\eps_1 + \eps_2}+1}{e^{\eps_1} + e^{\eps_2}}\big)}$.

                The deletion LDP case is achieved if we substitute $R_1(x')$ with the reference distribution $R_0$.
            \end{proof}

            \cite[Theorem 4.3]{naor_can_2020} proves the same result of $\eps$ and shows that the bound is tight. They also gives a weaker result of $\eps \eqdef \frac{1}{2}\eps_1 \eps_2$ which is easier to manipulate then $\eps \eqdef \ln\!{\big(\frac{e^{\eps_1 + \eps_2}+1}{e^{\eps_1} + e^{\eps_2}}\big)}$. See proof of $\ln\!{\big(\frac{e^{\eps_1 + \eps_2}+1}{e^{\eps_1} + e^{\eps_2}}\big)} \leq \frac{1}{2}\eps_1\eps_2$ for every $\eps_1,\eps_2>0$ in Appendix, \cref{composition-ldp-weaker-result-proof}.

            \begin{theorem}[Approximate LDP to pure LDP {\cite[Theorem~6.1]{bun_heavy_2017}}]
                Let $\eps\in(0,1/4]$ and $\delta\in[0,1]$, and suppose $R_1,\dots, R_n$ are the $(\eps,\delta)$-LDP randomizers. Then there exists a \emph{public-coin} protocol with randomizers $R'_1,\dots, R'_n$ such that, for every choice of integer
                \[
                    5\ln\frac{1}{\eps} \leq T \leq \frac{1-e^{-\eps}}{4\delta n e^\eps} 
                \]
                \begin{enumerate}
                    \item $R'_i$ is $10\eps$-LDP for every $i\in[n]$
                    \item for every dataset $x = (x_1,\dots,x_n)\in\mathcal{X}^n$, the total variation distance between the distribution $D_x$ of messages $(R_1(x_1),\dots, R_n(x_n))$ and the distribution $D'_x$ of $(R'_1(x_1),\dots, R'_n(x_n))$ 
                    satisfies
                    \[
                        \totalvardist{D_x}{D'_x} \leq n \mleft( \mleft(\frac{1}{2}+\eps\mright)^T + 6T\delta\cdot \frac{e^\eps}{1-e^{-\eps}}  \mright)
                    \]
                    \item Each user sends at most $\log_2 T$ bits of communication.
                    \item The protocol uses $T\sum_{i=1}^n r_i$ bits of public randomness, where $r_i$ is the number of random bits required by $R_i$.
                \end{enumerate}
            \end{theorem}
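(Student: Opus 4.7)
The plan is to build $R'_i$ via a \emph{public-coin rejection-sampling compression} of $R_i$. Using the $T\sum_i r_i$ bits of public randomness, instantiate, for each user $i$, an ordered list of $T$ independent random tapes $\rho_{i,1},\dots,\rho_{i,T}$ for $R_i$, and let $z_{i,t}\eqdef R_i(x_\mathrm{ref};\rho_{i,t})$ for a fixed reference input $x_\mathrm{ref}\in\mathcal{X}$. These candidate messages are publicly computable from the public coin by the analyzer. The private randomizer $R'_i(x_i)$ then selects an index $t\in[T]$ and transmits only $\log_2 T$ bits, which the analyzer decodes as the message $z_{i,t}$.

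For the selection, I would use sequential rejection sampling against the reference distribution $R_i(x_\mathrm{ref})$ with envelope $e^\epsilon$, scaled so that each per-step acceptance probability is at most $1/2$. Concretely, at step $t$ the user flips a private coin of bias
\[
    p_t \eqdef \tfrac{1}{2}\min\!\bigl\{1,\ e^{-\epsilon}\cdot \Pr[R_i(x_i)=z_{i,t}] / \Pr[R_i(x_\mathrm{ref})=z_{i,t}]\bigr\},
\]
outputs $t$ on the first success, and outputs a default index if all $T$ coins fail. The $1/2$ scaling is exactly what yields the matching $(1/2+\epsilon)^T$ ``all-reject'' term stated in the theorem, since on the ``good'' $(1-\delta)$-event where the inner likelihood ratio lies in $[e^{-\epsilon},e^{\epsilon}]$ each rejection probability is at most $1-\tfrac{1}{2}e^{-2\epsilon}\le 1/2+\epsilon$.

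For the total-variation bound, conditioned on acceptance on the good event the decoded $z_{i,t}$ is distributed exactly as $R_i(x_i)$ by the correctness of rejection sampling; the gap arises from (i)~the all-reject event, with probability at most $(1/2+\epsilon)^T$, and (ii)~a union bound over $T$ trials of the $\delta$-mass on which the likelihood ratio exceeds $e^\epsilon$, contributing an additive $O(T\delta e^\epsilon/(1-e^{-\epsilon}))$ and matching the stated per-user bound after summing over $n$ users. For the $10\epsilon$-pure-LDP guarantee, fix the public coin and write $\Pr[R'_i(x_i)=t]=p_t(x_i)\prod_{s<t}(1-p_s(x_i))$: each per-step acceptance ratio $p_t(x_i)/p_t(x'_i)$ is bounded by $e^\epsilon$ from the $\epsilon$-LDP-on-good-event property of $R_i$, and the cumulative rejection product telescopes to $\exp\!\bigl(O(\epsilon)\sum_{s<t}p_s(x'_i)\bigr)=e^{O(\epsilon)}$ via the sharpened bound $(1-p_s(x_i))/(1-p_s(x'_i))\le 1+O(\epsilon)\,p_s(x'_i)$.

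The main obstacle is this telescoping, together with controlling the $\delta$-mass events in a way that preserves \emph{pure} LDP (rather than approximate LDP) for $R'_i$: this is where the upper bound $T\le(1-e^{-\epsilon})/(4\delta n e^\epsilon)$ enters, by guaranteeing that the aggregate probability of encountering any ``bad'' $z_{i,t}$ across the entire protocol is negligible so that the privacy-loss adversary effectively sees only the good event. Bounding each rejection factor individually by $e^\epsilon$ would yield a blow-up $e^{T\epsilon}$ growing with $T$, so one must exploit both $p_s\le 1/2$ and that $\sum_s p_s$ is bounded in expectation by a constant to keep the cumulative damage $e^{O(\epsilon)}$; the specific factor $10$ comes out of careful bookkeeping of the three $O(\epsilon)$ contributions — per-step acceptance, cumulative rejection, and slack from the truncation and the $\delta$-handling.
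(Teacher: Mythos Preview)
The paper itself does not prove this theorem; it is stated with a citation to \cite{bun_heavy_2017} and no argument is given, so there is no in-paper proof to compare against. Your overall architecture (public-coin candidates drawn from a reference distribution, rejection-based index selection, $\log_2 T$-bit transmission) is indeed the construction behind the cited result, and items~3--4 are immediate from it; the shape of item~2 is also along the right lines.

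The genuine gap is in item~1, the \emph{pure} $10\eps$-LDP guarantee, and it occurs in two places. First, your acceptance probability $p_t(x_i)=\tfrac12\min\{1,\,e^{-\eps}\Pr[R_i(x_i)=z_{i,t}]/\Pr[R_i(x_{\mathrm{ref}})=z_{i,t}]\}$ is capped above at $1/2$ but not bounded below: if the public seed happens to produce a $z_{i,t}$ in the $\delta$-bad region where $\Pr[R_i(x'_i)=z_{i,t}]$ is near zero, the ratio $p_t(x_i)/p_t(x'_i)$ is unbounded and emitting index $t$ distinguishes $x_i$ from $x'_i$ perfectly. Your proposed patch --- that the bad event has ``negligible'' probability so the adversary ``effectively sees only the good event'' --- is exactly the reasoning that yields \emph{approximate} DP, not pure DP; pure LDP must hold for every fixed realization of the public coin and every output index, not merely with high probability over them. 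The actual construction clips the likelihood ratio on \emph{both} sides into $[e^{-\eps},e^{\eps}]$ before forming $p_t$, so that $p_t(x)/p_t(x')\le e^{2\eps}$ holds deterministically, with the resulting distortion absorbed into the TV bound of item~2. Second, even granting a bounded per-step ratio, your telescoping does not close: you invoke ``$\sum_s p_s$ is bounded in expectation by a constant,'' but pure LDP is conditional on the public coin, and for a fixed seed $\sum_{s<t}p_s(x'_i)$ can be as large as $T/2$, giving the $e^{O(T\eps)}$ blow-up you yourself flagged. An expectation over public randomness cannot rescue a worst-case-over-seeds guarantee; obtaining privacy loss independent of $T$ requires a sharper argument (or a different index-selection rule whose output law is a ratio of two $e^{O(\eps)}$-bounded sums rather than a $T$-fold product) than the one you sketched.
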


    \section{Relation between Shuffle DP and LDP} \label{shuffle_to_ldp}

        \begin{definition}[One-Message Shuffle Model]
            In the one-message shuffled model, each user sends $m=1$ message.
        \end{definition}

        \begin{theorem}[From One-Message Shuffle Model to LDP. \cite{cheu_distributed_2019}, Theorem 6.2]
        \label{theo:shuffle_to_ldp}
            If $P_n = (R, S, A)$ is a one-message shuffled model with $n \in \N$ users that satisfies $(\eps_S, \delta_S)$-DP, then the local randomizer $R$ satisfies $(\eps_L, \delta_L)$-LDP for $\eps_L = \eps_S + \ln{n}$ and $\delta_L = \delta_S$. Therefore, the symmetric local protocol $P_L = (R, A \circ S)$ satisfies $(\eps_L, \delta_L)$-DP.
        \end{theorem}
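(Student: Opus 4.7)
The plan is to reduce the $(\eps_L, \delta_L)$-LDP guarantee for $R$ directly to the $(\eps_S, \delta_S)$-DP guarantee on the shuffled output $S \circ R$, by instantiating a carefully chosen pair of neighboring datasets together with a suitable event on the shuffled multiset. Fix arbitrary $x, x' \in \mathcal{X}$ and a measurable set $T \subseteq \mathcal{Y}$; it suffices to establish
\[
    \Pr[R(x) \in T] \leq e^{\eps_S} \cdot n \cdot \Pr[R(x') \in T] + \delta_S,
\]
which is exactly the $(\eps_S + \ln n, \delta_S)$-LDP statement since $e^{\eps_S} \cdot n = e^{\eps_S + \ln n}$.

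To get this, I would take the two $n$-user datasets $X = (x, x', x', \ldots, x')$ and $X' = (x', x', \ldots, x')$, which differ only in the first coordinate and are therefore neighboring. Let $E_T$ be the event that the multiset produced by $S \circ R$ contains at least one message belonging to $T$; this is permutation-invariant and hence well defined as a measurable event on the shuffled output. The key observation is that $E_T$ amplifies the gap between $\Pr[R(x) \in T]$ and $\Pr[R(x') \in T]$ by a factor of $n$: under $X$ one of the messages is distributed as $R(x)$, so $\Pr[E_T \mid X] \geq \Pr[R(x) \in T]$; under $X'$ all $n$ messages are i.i.d.\ copies of $R(x')$, so a union bound yields $\Pr[E_T \mid X'] \leq n \cdot \Pr[R(x') \in T]$. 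Chaining these bounds through the shuffle-DP guarantee applied to $E_T$, namely $\Pr[E_T \mid X] \leq e^{\eps_S}\Pr[E_T \mid X'] + \delta_S$, produces the target inequality.

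The second assertion that $P_L = (R, A \circ S)$ satisfies $(\eps_L, \delta_L)$-DP then follows from a standard two-line argument: running the same $(\eps_L, \delta_L)$-LDP randomizer $R$ independently on each coordinate yields an $(\eps_L, \delta_L)$-DP mechanism on datasets differing in a single entry (condition on the other coordinates and invoke LDP on the affected one), after which both the shuffler $S$ and the analyzer $A$ are pure post-processing. I do not anticipate any substantial technical obstacle here; the main conceptual point is to choose the padding symbol in $X, X'$ to be $x'$ itself, so that the union bound on $E_T$ under $X'$ gives exactly the factor $n$ one needs, rather than a looser bound one would obtain with an arbitrary padding value.
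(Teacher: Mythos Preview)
Your proposal is correct and follows essentially the same approach as the paper: construct two neighboring datasets in which all but one coordinate are identical, take the permutation-invariant event ``at least one shuffled message lies in $T$'', lower-bound it on one side by the single distinguished user and upper-bound it on the other via a union bound over the $n$ identical users, then invoke the shuffle-DP guarantee. The only cosmetic difference is that the paper phrases the argument contrapositively (defining the supremum of $\eps'$ for which $R$ fails to be $(\eps',\delta_S)$-LDP and deriving a contradiction), whereas you prove the LDP inequality directly; your version is arguably cleaner since it sidesteps the limit argument the paper needs when the supremum is not attained.
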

        \begin{proof}
            Let $\mathcal{E}_{R,\delta_S}$ denote the set of parameters $\eps'>0$ for which $R\colon \mathcal{X} \to \mathcal{Y}$ is \emph{not} $(\eps', \delta_L)$-LDP, and let $\eps$ denote the supremum of $\mathcal{E}_{R,\delta_S}$. We can assume without loss of generality that $\eps\in \mathcal{E}_{R,\delta_S}$ (otherwise, we can use the argument below with $\eps' \eqdef \eps(1-\alpha)$ for some small $\alpha$ and take the limit as $\alpha\downarrow 0$ in the end).
            \begin{equation}
                \label{eq:implication:notldp}
                \Pr[R(x') \in Y] > e^\eps \Pr[R(x) \in Y] + \delta_L
            \end{equation}
            for some $x, x' \in \mathcal{X}$ and $Y \subseteq \mathcal{Y}$. We define $p' \eqdef \Pr[R(x') \in Y]$ and $p \eqdef \Pr[R(x) \in Y]$ for brevity, so that the above becomes $p' > e^\eps p + \delta_S$.

            Then we define the set $\mathcal{W} \eqdef \{W \in \mathcal{Y}^n \mid \exists i, w_i \in Y\}$, which is a set of output for $P$ where any of its randomizer is not local differentially private. Construct two databases of size $n$: $X \eqdef (x,  x,\ldots, x)$ and $X' \eqdef (x', x, \ldots, x)$.

            Since $P_n$ is $(\eps_S, \delta_S)$-differentially private, we can write:
            \begin{equation}
                \label{eq:implication:dp}
                \Pr[P_n (X') \in \mathcal{W}] \leq e^\eps_S \Pr[P_n (X) \in \mathcal{W}] + \delta_S
            \end{equation}

            Now we have:
            \begin{align}
                \Pr[P_n (X) \in \mathcal{W}]
                &= \Pr[S(R(x), \ldots, R(x)) \in \mathcal{W}] \notag\\
                &= \Pr[(R(x), \ldots, R(x)) \in \mathcal{W}] \notag\\
                &= \Pr[\exists i, R(x) \in Y] \notag\\
                &\leq n\Pr[R(x) \in Y] = np, \label{eq:shuffle_to_ldp:publiccoinstep}
            \end{align}
            where the second equality is because $\mathcal{W}$ is closed under permutation, and the inequality results from the union bound.

            Similarly, we have:
            \begin{align*}
                \Pr[P_n (X') \in \mathcal{W}]
                &= \Pr[(R(x'), R(x), \ldots, R(x)) \in \mathcal{W}]\\
                &\geq \Pr[R(x') \in Y] = p'\\
                &> e^\eps \cdot p + \delta_S
            \end{align*}
            where the last inequality comes from~\eqref{eq:implication:notldp}. Finally we can rewrite~\eqref{eq:implication:dp}:
            \begin{align*}
                e^\eps p + \delta_S < \Pr[P_n (X') \in \mathcal{W}]
                &\leq e^{\eps_S} \Pr[P_n (X) \in \mathcal{W}] + \delta_S\\
                &\leq e^{\eps_S} np + \delta_S,
            \end{align*} Finally we obtain:
            \[
                \eps < \eps_S + \ln{n}.
            \]
            This means that the largest possible $\eps$ for which $R$ is not $(\eps,\delta_S)$-DP is strictly smaller than $\eps_L \eqdef \eps_S + \ln{n}$. So $R$ is $(\eps_L,\delta_S)$-DP.
        \end{proof}
        
        Note that the above proof applies to both public-coin and private-coin protocols, as the only key step where the possibly joint randomization is taken into account is~\eqref{eq:shuffle_to_ldp:publiccoinstep}. It is not clear whether the analysis can be significantly improved for private-coin protocols, for which $np$ can be replaced by $1-(1-p)^n$ in the RHS of~\eqref{eq:shuffle_to_ldp:publiccoinstep}.

%

        \begin{theorem}[From LDP to shuffler. {\cite[Theorem 3.2]{feldman_stronger_2022}}]
            Let $R$ be an $\eps_L$-differentially private local randomizer. Let $S = (R, \pi)$ be the shuffler that given a dataset, samples a uniform random permutation $\pi$ over $[n]$, then sequentially compute reports $z_i = R^{(i)}(z_{i-1}, x_{\pi (i)})$ for $i \in [n]$ and outputs $z_{1:n}$. Then for any $\delta \in [0, 1]$ such that $\eps_L \leq \ln{(\frac{n}{8\ln(2/\delta)} - 1)}$. $S$ is $(\eps, \delta)$-DP, where:
            \begin{equation}
                \eps \leq \ln\!
                \bigg(
                    1 + 4(e^{\eps_L} - 1)
                    \bigg( \sqrt{\frac{2\ln (4 / \delta)}{(e^{\eps_L} + 1)n}} + \frac{1}{n} \bigg)
                \bigg)
            \end{equation}
            In particular, for $\eps_L \leq 1$, this gives $\eps = \bigO{\eps_L \sqrt{ \frac{\log(1/\delta)}{n} }}$, and for $\eps_L \geq 1$ we get $\eps = \bigO{ e^{\eps_L/2}\sqrt{ \frac{\log(1/\delta)}{n} }}$.
        \end{theorem}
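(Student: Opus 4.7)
The plan is to use the ``clone'' (mixture decomposition) argument of Feldman, McMillan, and Talwar. Fix two neighboring datasets $X, X'$ differing at position $i^*$, with $x \eqdef x_{i^*}$ and $x' \eqdef x'_{i^*}$. Since the permutation $\pi$ is uniform and independent of the data, the joint distribution of the report sequence $(z_1, \dots, z_n)$ depends on the inputs only through their multiset, so the interactive/sequential definition of the $R^{(i)}$ does not complicate the privacy analysis beyond the standard non-interactive shuffle case.

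First I would apply the Kairouz--Oh--Viswanath decomposition (the Corollary stated earlier in this excerpt) to the differing pair $(x, x')$, producing a binary randomizer $Q\colon \{0,1\} \to \mathcal{Y}$ such that
\[
    R(x) = \frac{e^{\eps_L}}{e^{\eps_L}+1} Q(0) + \frac{1}{e^{\eps_L}+1} Q(1), \qquad R(x') = \frac{1}{e^{\eps_L}+1} Q(0) + \frac{e^{\eps_L}}{e^{\eps_L}+1} Q(1).
\]
Setting $\gamma \eqdef 2/(e^{\eps_L}+1)$ and the blanket $B \eqdef \frac{1}{2}(Q(0)+Q(1))$, one verifies $R(x) = \gamma B + (1-\gamma) Q(0)$ and $R(x') = \gamma B + (1-\gamma) Q(1)$. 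For users $j \neq i^*$ the input is identical across $X$ and $X'$ and their contributions couple perfectly, so only user $i^*$'s output (together with the ``clones'' drawn from $B$) needs to be analyzed. Let $C$ be the number of users $j \neq i^*$ whose blanket flag fires; then $C \sim \mathrm{Bin}(n-1, \gamma)$, and the hypothesis $\eps_L \leq \ln(n/(8\ln(2/\delta))-1)$ rearranges to $(n-1)\gamma \gtrsim 8\ln(2/\delta)$, so a multiplicative Chernoff bound yields $\Pr[C < (n-1)\gamma/2] \leq \delta/2$.

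Conditioning on $C = c$, each clone message is independently drawn from $B = \frac{1}{2}Q(0)+\frac{1}{2}Q(1)$, so the shuffled multiset of clone outputs is distributed as $K$ samples from $Q(0)$ and $c-K$ from $Q(1)$, with $K \sim \mathrm{Bin}(c, 1/2)$. The indistinguishability of $X$ and $X'$ then reduces to bounding the hockey-stick divergence between the distributions of ``number of $Q(0)$-samples among the $c+1$ shuffled draws'', where user $i^*$'s extra draw comes from $Q(0)$ with probability $1-\gamma/2$ under $X$ and with probability $\gamma/2$ under $X'$. Using the Binomial ratio identity $\Pr[\mathrm{Bin}(c,1/2)=k]/\Pr[\mathrm{Bin}(c,1/2)=k-1] = (c-k+1)/k$ together with a Chernoff tail bound restricting attention to $|k - c/2| \lesssim \sqrt{c\ln(4/\delta)}$, one obtains a hockey-stick divergence of order $(e^{\eps_L}-1)/\sqrt{c}$ up to a further $\delta/2$ failure probability. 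Substituting $c \geq (n-1)\gamma/2 \asymp n/(e^{\eps_L}+1)$ yields the stated bound.

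The main obstacle will be this final Binomial tail computation: matching the explicit constants ($4$, $\sqrt{2\ln(4/\delta)}$, and the additive $1/n$ term) requires carefully balancing the two $\delta/2$ failure events (clone-count concentration and Binomial deviation) and tracking the exact Binomial ratio rather than an asymptotic bound. A more relaxed version of this computation straightforwardly recovers the claimed scalings $\eps = O(\eps_L\sqrt{\log(1/\delta)/n})$ for $\eps_L \leq 1$ and $\eps = O(e^{\eps_L/2}\sqrt{\log(1/\delta)/n})$ for larger $\eps_L$, corresponding respectively to the linear and exponential $\eps_L$-dependence of the factor $(e^{\eps_L}-1)/\sqrt{e^{\eps_L}+1}$ appearing in the bound.
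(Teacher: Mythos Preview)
The paper does not actually prove this theorem: immediately after the statement it writes ``We redirect readers to their work for the proof, and the extension when the local randomizer is itself approximate LDP.'' So there is no in-paper argument to compare against; your sketch goes well beyond what the paper itself supplies.

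That said, your outline is indeed the clone/mixture-decomposition argument of Feldman, McMillan, and Talwar, and the high-level steps (KOV decomposition for the differing pair, blanket mixture with weight $\gamma = 2/(e^{\eps_L}+1)$, Binomial concentration for the clone count, and a hockey-stick computation on shifted Binomials) are the correct ones. One point deserves more care: you dismiss the sequential/adaptive dependence $z_i = R^{(i)}(z_{i-1}, x_{\pi(i)})$ by asserting that the output distribution depends only on the input multiset, but this is precisely the subtle part of the cited result. When each $R^{(i)}$ may adapt to previously released messages, the reduction to a fixed pair $(Q(0), Q(1))$ and a single blanket $B$ is not immediate, and handling this adaptivity correctly was a nontrivial contribution of the original paper. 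For the non-adaptive special case your sketch is sound up to the constant-tracking you flag; for the stated adaptive version you would need to argue more carefully why the decomposition can be applied step by step.
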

        We redirect readers to their work for the proof, and the extension when the local randomizer is itself approximate LDP.
        
        On a side note, above theorem holds true for robust shuffle privacy if we replace $n = \gamma  n$ for $\gamma \in (0, 1]$ being the fraction of users that actually follows the protocol (i.e., non-malicious). Indeed, this follows by applying exactly the same argument (amplification b shuffling) as in the proof of the theorem, but only to the $n' \eqdef \gamma n$ honest users.
        
        Finally, it is worth noting that the original LDP guarantee \emph{does} of course still hold after shuffling; and so in particular the output of the shuffler is both $(\eps, \delta)$-DP and $\eps_L$-DP.

    \section{Privacy Amplification by Subsampling}

        The chapter ``Composition of Differential Privacy \& Privacy Amplification by Subsampling'' \cite[Section 6]{steinke_composition_subsampling} written by Thomas Steinke contains more in-depth proofs and claims on this topic. Here, we state a few that is important and provide proofs in alternative methods (which is somewhat simpler and weaker).
    
        \begin{theorem}[Privacy Amplification by Subsampling for Approximate DP. {\cite[Theorem 28]{steinke_composition_subsampling}}]
            Let $U \subseteq [n]$ be a random subset and $X$ be a dataset with $n$ records. Define $M\colon \mathcal{X}^n \to \mathcal{Y}$ to satisfy $(\eps, \delta)$-DP. Construct a mechanism $M^U\colon \mathcal{X}^n \to \mathcal{Y}$ where it first samples $|U|$ records from $X$ to form the smaller dataset $X_U$ and then compute $M(X^U)$. We further let $p = \max_{i \in [n]} \Pr_U[i \in U]$. Then $M^U$ is $(\eps', \delta')$-DP for $\eps' = \ln(1 + p(e^\eps - 1))$ and $\delta' = p\delta$.
        \end{theorem}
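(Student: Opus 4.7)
The plan is to condition on whether the differing index is sampled, then combine the two resulting cases. Fix neighboring $X \sim X'$ differing at index $i^* \in [n]$ and let $q \eqdef \Pr_U[i^* \in U] \leq p$. On the event $i^* \notin U$ (probability $1-q$), we have $X_U = X'_U$, so $M(X_U)$ and $M(X'_U)$ share a common distribution $A$. On the event $i^* \in U$ (probability $q$), $X_U$ and $X'_U$ are themselves neighboring datasets (differing only at the sample slot corresponding to $i^*$), so by the $(\eps, \delta)$-DP of $M$ the conditional distributions $B \eqdef M(X_U)\mid (i^* \in U)$ and $B' \eqdef M(X'_U)\mid (i^* \in U)$ satisfy $B(S) \leq e^\eps B'(S) + \delta$ and symmetrically, for every $S$. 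Hence
\[
\Pr[M^U(X) \in S] = (1-q)A(S) + qB(S), \qquad \Pr[M^U(X') \in S] = (1-q)A(S) + qB'(S).
\]

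Writing $\eps' \eqdef \ln(1 + q(e^\eps - 1))$, the task reduces to showing $\Pr[M^U(X) \in S] \leq e^{\eps'} \Pr[M^U(X') \in S] + q\delta$ for every $S$. Substituting $B(S) \leq e^\eps B'(S) + \delta$ into the mixture identities and collecting terms yields
\[
\Pr[M^U(X) \in S] - e^{\eps'} \Pr[M^U(X') \in S] \leq q(1-q)(e^\eps - 1)\bigl(B'(S) - A(S)\bigr) + q\delta,
\]
which is already $\leq q\delta$ on the event $B'(S) \leq A(S)$. For the opposite case $B'(S) > A(S)$, I would invoke the reverse direction of the DP inequality applied to $S^c$, namely $B(S) \leq 1 - e^{-\eps}(1 - B'(S) - \delta)$, and re-derive the excess; this alternative bound becomes tight precisely in the regime where the first is loose, so taking the minimum of the two bounds closes all cases.

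The main obstacle is exactly this combination step. A naive use of the joint convexity of the hockey-stick divergence $D_{e^\eps}(\cdot \| \cdot)$ only yields the weaker amplification $(\eps, q\delta)$-DP; obtaining the advertised tight bound $\eps' = \ln(1 + q(e^\eps - 1))$ requires the so-called \emph{advanced joint convexity} of the hockey-stick divergence, which essentially formalizes the two-case combination above and crucially exploits the \emph{global} nature of the parameter $\delta$ (it bounds the slack over all measurable sets \emph{and} both directions of the DP inequality simultaneously). Once the bound with $q$ is established, the theorem follows by monotonicity: since $q \leq p$ we have $\ln(1 + q(e^\eps - 1)) \leq \ln(1 + p(e^\eps - 1))$ and $q\delta \leq p\delta$, giving the claimed $(\eps', p\delta)$-DP guarantee for $M^U$.
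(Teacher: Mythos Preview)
Your decomposition into the mixtures $(1-q)A + qB$ and $(1-q)A + qB'$ matches the paper's, and the algebra in the case $B'(S) \leq A(S)$ is correct. The gap is in the other case. Your proposed fix --- the reverse DP inequality on $S^c$, giving $B(S) \leq 1 - e^{-\eps}(1 - B'(S) - \delta)$ --- still only relates $B$ to $B'$, and no amount of $B\leftrightarrow B'$ information can suffice. Concretely, take $\delta = 0$, $A(S) = 0$, $B'(S) = 1/(e^\eps + 1)$, $B(S) = e^\eps/(e^\eps + 1)$: both directions of the $B \leftrightarrow B'$ inequality (and hence both of your bounds) hold with equality, yet
\[
(1-q)A(S) + qB(S) \;=\; \frac{qe^\eps}{e^\eps+1} \;>\; \bigl(1 + q(e^\eps - 1)\bigr)\frac{q}{e^\eps+1} \;=\; e^{\eps'}\bigl((1-q)A(S)+qB'(S)\bigr)
\]
for every $q < 1$. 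So taking the minimum of your two bounds cannot close the argument, and your description of advanced joint convexity as ``formalizing the two-case combination above'' misidentifies what that lemma actually supplies.

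The missing ingredient --- which the paper's proof uses explicitly --- is a \emph{second} DP relation, between $B$ and $A$: conditioned on $i^\ast \in U$ versus $i^\ast \notin U$, the subsampled datasets can be coupled to be neighbors (they differ only in the slot containing $x_{i^\ast}$), hence $B(S) \leq e^\eps A(S) + \delta$ as well. The paper combines both relations via $B(S) \leq e^\eps \min\bigl(B'(S), A(S)\bigr) + \delta$, writes $e^\eps = 1 + (e^\eps - 1)$, and then uses $\min\bigl(B'(S),A(S)\bigr) \leq qB'(S) + (1-q)A(S)$ together with $\min\bigl(B'(S),A(S)\bigr) \leq B'(S)$ to reach the bound in one stroke. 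Advanced joint convexity needs the same extra ingredient: after the convexity step one is left with $D_{e^\eps}\!\bigl(B \,\|\, (1-\beta)A + \beta B'\bigr)$, and bounding this by $\delta$ requires $D_{e^\eps}(B\,\|\,A) \leq \delta$, which is exactly the $B \leftrightarrow A$ relation you never established.
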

        
        A more general version of this theorem is defined in \cite[Theorem 28]{steinke_composition_subsampling}. The complete proof is available in \cite{steinke_composition_subsampling}, but here we present a self-contained version for the case of $p = m/n$, using the same notation as lecture notes of Jonathan Ullman \footnote{From CS7880 Homework 1 Solution: \url{http://www.ccs.neu.edu/home/jullman/cs7880s17/HW1sol.pdf} (Accessed Nov 21, 2022).}.

        \begin{proof}
            For datasets $X \sim X'$, let $p = \frac{m}{n} = \Pr[i \in U]$. The goal is to show that $M^U$ is $(\ln{(1 + p(e^\eps - 1))}, p\delta)$-DP. Fix any measurable $S\subseteq \mathcal{Y}$: to show the result, we need to upper bound the following ratio by $1 + p(e^\eps - 1))$:
            \begin{align*}
                \frac{\Pr[M_U(X) \in S] - \delta p}{\Pr[M_U(X') \in S]}
                &= \frac{\Pr[M(X_U) \in S] - \delta p}{\Pr[M({X_U'}) \in S]}\\
                &= \frac{\Pr[M(X_U) \in S \mid i \in U]p + \Pr[M(X_U) \in S \mid i \notin U](1-p) - \delta p}{\Pr[M(X_U') \in S \mid i \in U]p + \Pr[M(X_U') \in S \mid i \notin U](1-p)}
            \end{align*}
            For convenience, we set:
            \begin{align*}
                C &= \Pr[M(X_U) \in S \mid i \in U]\\
                C' &= \Pr[M(X_U') \in S \mid i \in U]\\
                D &= \Pr[M(X_U) \in S \mid i \notin U] = \Pr[M(X_U') \in S \mid i \notin U]
            \end{align*}
            Now we have:
            \[
                \frac{\Pr[M_U(X) \in S] - \delta p}{\Pr[M_U(X') \in S]} = \frac{pC + (1-p)D - p\delta}{pC' + (1-p)D}
            \]
            \begin{align*}
                \Pr[M_U(X) \in S] - \delta p
                &= pC + (1-p)D - p\delta\\
                &\leq p \big( e^\eps \min(C', D) + \delta \big) + (1-p)D - p\delta\\
                &= p \big( \min(C', D) + (e^\eps - 1)\min(C', D) + \delta \big) + (1-p)D - p\delta\\
                &\leq p \big( \min(C', D) + (e^\eps - 1)(pC' + (1-p)D) + \delta \big) + (1-p)D - p\delta\\
                &\leq p \big( C' + (e^\eps - 1)(pC' + (1-p)D) + \delta \big) + (1-p)D - p\delta\\
                &= p \big( C' + (e^\eps - 1)(pC' + (1-p)D) \big) + (1-p)D\\
                &= \big( pC' + (1-p)D \big) \times \big( p(e^\eps - 1) + 1 \big)\\
                &= \exp{(\ln{(p(e^\eps - 1) + 1)})}\big( pC' + (1-p)D \big)
            \end{align*}
            The first inequality follows from the definition of $(\eps, \delta)$-DP. Note that $C \leq e^\eps D + \delta$ because $X_U$ with $x_i$ is still neighboring to subset of $X$ of size $U$ without $x_i$. The second inequality uses the fact that $\min(x, y) \leq \alpha x + (1-\alpha)y, \forall \alpha \in [0, 1]$, where we use $\alpha = p$. This simpler proof get the same bound as \cite{steinke_composition_subsampling} by applying $e^{\ln}$ on the second-to-last line.

            From above, we finally get:
            \begin{align*}
                \frac{\Pr[M_U(X) \in S] - \delta p}{\Pr[M_U(X') \in S]}
                &\leq \exp(\ln(p(e^\eps - 1) + 1)),
            \end{align*}
            which concludes the proof.
        \end{proof}

    \printbibliography

    \section{Appendix}
        \subsection{Proof of the weaker result in \cref{composition-ldp-weaker-result}} \label{composition-ldp-weaker-result-proof} 
            Recall that the goal is to prove:
            \[
                \frac{1+e^{x+y}}{e^x + e^y} \leq \exp{\mleft(\frac{xy}{2}\mright)}.
            \]
            \begin{proof}\footnote{This proof is due to Thomas Steinke. See \href{https://twitter.com/ccanonne_/status/1564500476200382464?s=20&t=pVsmb0QxtQXn6Pf3OntG5A}{here} for the proof in emoji flavor.}
                Suppose w.l.o.g that $0 \leq x \leq y$. Consider the random variable $Z$ such that:
                \[
                    \Pr[Z = x] = \frac{e^y}{e^x + e^y} \quad \text{and} \quad \Pr[Z = -x] = \frac{e^x}{e^x + e^y}.
                \]
                Then $|Z| \leq x$ and $\expect{Z} = x \cdot \frac{e^y - e^x}{e^x + e^y}$. We also have, on the one hand, that
                $\expect{e^Z} = \frac{1 + e^{x+y}}{e^x + e^y}$.
                
                On the other hand, by Hoeffding's Lemma 
                    \footnote{Hoeffding's Lemma: Let $X$ be any real-valued random variable such that $X \in [a, b]$ (w. prob 1). Then, for all $\lambda \in \R$, $\expect{e^{\lambda X}} \leq \exp{\mleft( \lambda \expect{X} + \frac{\lambda^2 (b - a)^2}{8} \mright)}$}
                with $\lambda = 1$:
                \begin{align*}
                    \expect{e^Z}
                    &\leq \exp\mleft( \expect{Z} + \frac{(2x)^2}{8}\mright)\\
                    &= \exp\mleft( x \cdot \frac{e^y - e^x}{e^x + e^y} + \frac{x^2}{2}  \mright)\\
                    &= \exp\mleft( x \cdot \frac{e^{y - x} - 1}{e^{y - x} + 1} + \frac{x^2}{2}  \mright).
                \end{align*}

                Then we apply this inequality: for every $t \geq 0$, $\frac{e^t - 1}{e^t + 1} \leq \frac{t}{2}$. \footnote{This can be proven by the concavity of $f\colon t \to \frac{e^t-1}{e^t + 1}$ on $[0, \infty)$, so $f(t) \leq f(0) + f'(0)t = \frac{t}{2}$.} With $t:= y-x$ and substituting the above result, we get
                \begin{align*}
                    \frac{1 + e^{x+y}}{e^x + e^y} 
                    &\leq \exp\mleft( x \cdot \frac{e^{y - x} - 1}{e^{y - x} + 1} + \frac{x^2}{2}  \mright)\\
                    &\leq \exp\mleft( x \cdot \frac{(y-x)}{2} + \frac{x^2}{2}  \mright)\\
                    &= \exp\mleft( \frac{xy}{2} \mright). \qedhere
                \end{align*}                
            \end{proof}

\end{document}